\documentclass{cccg26}
\usepackage{graphicx,amssymb,amsmath}
\usepackage{makecell}
\usepackage{listings}
\usepackage[flushleft]{threeparttable}
\usepackage{xcolor}
\usepackage{hyperref}
{\makeatletter \hypersetup{pdftitle={\@title}}}

{\makeatletter
 \gdef\xxxmark{%
   \expandafter\ifx\csname @mpargs\endcsname\relax 
     \expandafter\ifx\csname @captype\endcsname\relax 
       \marginpar{xxx}
     \else
       xxx 
     \fi
   \else
     xxx 
   \fi}
 \gdef\xxx{\@ifnextchar[\xxx@lab\xxx@nolab}
 \long\gdef\xxx@lab[#1]#2{\textbf{[\xxxmark #2 ---{\sc #1}]}}
 \long\gdef\xxx@nolab#1{\textbf{[\xxxmark #1]}}
}

{\makeatletter \gdef\fps@figure{!htbp}}

\let\epsilon=\varepsilon
\def\defn#1{\textbf{\textit{\boldmath #1}}}

\title{Polyomino Nets Covering Three Different Boxes of Area 106
and Related Results}

\author{Erik D. Demaine \thanks{Massachusetts Institute of Technology, USA \texttt{edemaine@mit.edu}}
	\and
	Jenny Diomidova \thanks{Massachusetts Institute of Technology, USA \texttt{diomidova@mit.edu}}
    \and
    Nicole Jacobus \thanks{Stand-up Maths, UK \texttt{nicole@standupmaths.com}}
    \and
    Landon Kryger \thanks{Alum of Washington State University, USA \texttt{landon.kryger@gmail.com}}
    \and
    Matthew T. Parker \thanks{Stand-up Maths, UK \texttt{matt@standupmaths.com}}
    \and
    Michael Tardibuono \thanks{Alum of the University of Waterloo, Canada \texttt{michael.tardibuono.math@gmail.com}}
    \and
    Ryuhei Uehara \thanks{Japan Advanced Institute of Science and Technology, Japan \texttt{uehara@jaist.ac.jp}}
    \and
    Hanyu Alice Zhang \thanks{Cornell University, USA, \texttt{hz496@cornell.edu}}}

\index{Author, First}
\index{Researcher, Second}

\begin{document}
\thispagestyle{empty}
\maketitle
\begin{abstract}

We present new results for polyomino nets that fold into 2 and 3 different cuboids through a computer search. The main result is the finding of 40 nets that fold into all three different cuboids with a surface area of 106. The secondary results are the finding of infinite families of nets that fold into three cuboid shapes, and the calculation of the number of common nets between smaller cuboids. The algorithms used to make the searches feasible will also be explained. The algorithms include taking advantage of some hidden structures in the nets that fold into the $N\times 1 \times 1$ cuboids, taking advantage of how a lot of nets fold into cuboid shapes in a 'striped' way, and using a variant of Redelmeier's algorithm. The paper ends with open questions that encourage the reader to broaden our collective understanding of the subject of creating polyomino nets.
\end{abstract}

\section{Introduction}

In the 1990s, there was a curious discovery that the same development (\emph{i.e.} net) could fold and create multiple different shapes~\cite[p. 424]{notes_1999,geometric_folding_algo}. One line of inquiry into this topic is focused around whether or not a polyomino net (\emph{i.e.} squares glued together edge-wise) can fold into multiple different cuboids (\emph{i.e.} boxes). This area of study, the common developments of boxes, offers results that are visually interesting even for a lay audience, can be made into puzzles, and motivates the creation of algorithms that may be engineered into solutions for other applications. Previous studies have shown nets that cover three different cuboids~\cite{magic_lid}, as well as a proof that there are no nets that cover three cuboids of area 58 or less despite there being many nets that cover two cuboids~\cite{qian2025unfoldingboxeslocalconstraints}. For a list of papers studying this topic, see~\cite{Abel2011,mitani2009polygons,qian2025unfoldingboxeslocalconstraints,magic_lid, multiple_ways, intro_comp_origami,XU1210214, Xu-Uehara-area-30}.

In this paper, we will present new nets that cover three cuboids with areas smaller than previously known, share results for many different types of nets, and share ideas that may help us create algorithms that are better at finding nets covering multiple cuboids. Results in the paper follow the \defn{grid unfolding, no overlap} model of folding, which requires all nets be polyominoes that:

\begin{itemize}
    \item can only fold along the grid as defined by the polyomino, and
    \item can be laid on a flat surface without overlap.
\end{itemize}

Reference material and the code used to get these results were posted on zenodo.org~\cite{michael_tardibuono_2026_NetSolverClean, michael_tardibuono_2026_stack_net, michael_tardibuono_2026_solver_thorough, michael_tardibuono_2026_material}. See Appendix \ref{zenodo_appendix} for more details.

\section{The Non-Exhaustive Search Approach}

\begin{figure}
    \centering
    \includegraphics[width=0.5\linewidth]{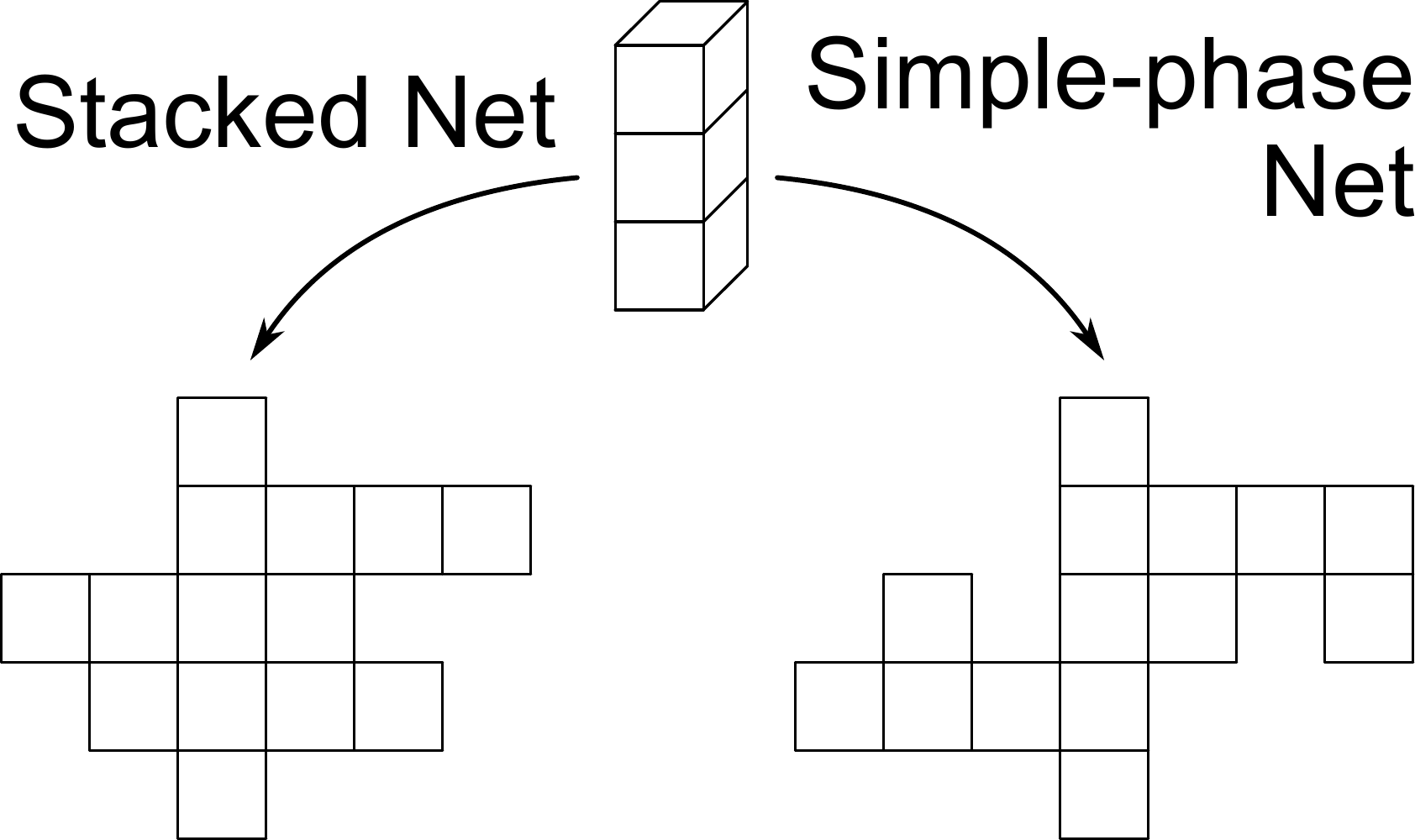}
    \caption{Examples of a \defn{stacked net} (left) and a \defn{simple-phase net} (right) of a $1\times 1\times 3$ cuboid.}
    \label{fig:stacked-simple-phase-examples}
\end{figure}

Although a thorough search of all nets for cuboids with a surface area of 46 didn't reveal any positive results for a net that can fold into 3 cuboids (of size $1\times 1\times 11$, $1\times 2\times 7$, and $1\times 3\times 5)$, we found that the $N\times 1 \times 1$ nets with the top and bottom $1\times 1$ sides only having one neighbour was a promising place to search because the nets looked simple and amenable to optimizations. We decided to search those nets in two different ways: 

\begin{enumerate}
    \item Nets where every layer that is not the top or bottom $1\times 1$ side has all 4 tiles attached to each other. We call these \defn{stacked nets}. The left hand side of Figure~\ref{fig:stacked-simple-phase-examples} shows a \defn{stacked net} for a $1\times 1\times 3$ cuboid.
    \item Nets that obey the rule that the top and bottom $1\times 1$ sides have only one neighbor. We call these \defn{simple-phase nets}. The right hand side of Figure~\ref{fig:stacked-simple-phase-examples} shows a \defn{simple-phase net} for a $1\times 1\times 3$ cuboid.
\end{enumerate}

\section{Results after Exhaustively Searching for Stacked Nets Covering the $N \times 1 \times 1$ Cuboids along with 1 other Cuboid}
\label{sec:stacked-net}

\begin{figure}
\centering
\includegraphics[width = 0.8\linewidth]{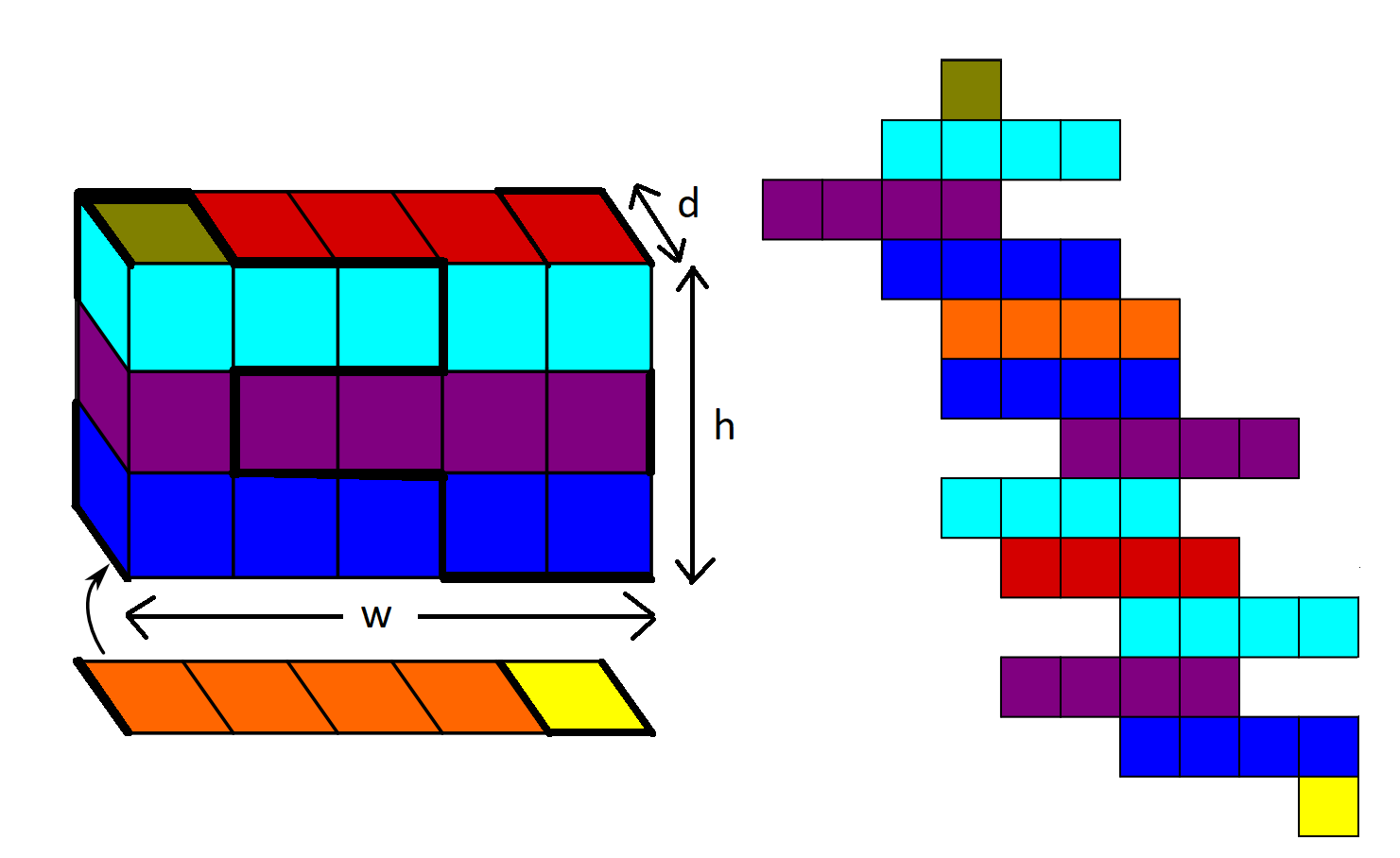}
\caption{Basic picture of a striped cuboid (left) along with the stacked $N\times 1 \times 1$ net (right). The thick black lines on the cuboid represent the cut lines for unfolding into the net.}
\label{fig:stripe_diagram}
\end{figure}

In this section, we will start with results for common unfoldings of 2 orthogonal boxes. For the numbers we were able to enumerate in Appendix~\ref{stacked_results_table}, we can verify that the combinations with a large number of nets are generally of the forms $$N\times 1 \times 1\ \&\ M\times (4k+3)\times 3$$ $$\mbox{or}$$ $$N\times 1 \times 1\ \&\ M \times (4k+1) \times 1.$$ For cuboids with these dimensions, over 95\% (but never 100\%) of the nets enumerated in Appendix~\ref{stacked_results_table} are \defn{striped nets}: a stacked net where if we paint the stacks in a certain pattern, the non-stacked cuboid it folds into gets ringed with that striped pattern. (\defn{striped cuboids} are the cuboids that have the striped pattern and \defn{striped rings} are the rings around the cuboid with a specific colour). Figure~\ref{fig:stripe_diagram} provides an illustration of the colouring. Striped cuboids have a variety of properties that we have proven in Appendix~\ref{striped_net_proofs}. Here, we highlight the following theorem:

\begin{theorem}(\textbf{The equation is $f(h)\approx(p/q)7^h$})\\
Assuming that the number of non-striped rings on the sides of the striped cuboid remains constant,
the number of solutions for a striped cuboid using the $N \times 1 \times 1$ cuboid as the layering cuboid is $\approx (p/q)7^h$ where $h$ is the height $h$ and $p/q$ is a rational number.
\end{theorem}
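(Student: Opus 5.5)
The plan is to treat a striped solution as a word: one letter per striped ring, read up the height of the cuboid, with the fixed non-striped part acting as boundary data. Then $f(h)$ becomes a count of words of length about $h$, and the $7^h$ growth comes from showing that each intermediate ring can be joined to its neighbours in exactly $7$ ways, independently of everything else.

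First, using the structure of striped cuboids established in Appendix~\ref{striped_net_proofs}, I would fix the decomposition of the target cuboid of height $h$. Each striped ring is the image of one stack (a $4$-tile layer) of the $N\times 1\times 1$ stacked net. By hypothesis there is a constant number of non-striped rings near the top and bottom, so the number of striped rings is $h-c$ for a constant $c$. A stacked net is determined by how consecutive stacks are glued, that is, by which of the $4$ edges of layer $i$ meets which edge of layer $i+1$. Cutting along the thick lines of Figure~\ref{fig:stripe_diagram}, every gluing between consecutive striped rings is a local choice.

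The key step is a transfer-matrix argument. I would define, for each junction between adjacent striped rings, the set of admissible local configurations: how the stack wraps around the ring, and where its attachment edge to the next stack sits. This gives a transfer matrix $T$ whose entry $T_{ab}$ is $1$ when state $a$ can be followed by state $b$, with the gluing remaining consistent with both the cuboid and the $N\times 1\times 1$ folding. I would then verify that
\begin{itemize}
\item $T$ has all row sums equal to $7$ (or at least Perron eigenvalue $7$), by enumerating the ways a $4$-cycle stack can shift or rotate against the previous ring across the ring perimeter of width $4k+3$ or $4k+1$;
\item non-overlap in the plane is automatic for stacked nets, since stacks are laid out monotonically.
\end{itemize}
Then
\[
f(h)=u^{\top}T^{\,h-c}v,
\]
where $u$ and $v$ encode the constantly many compatible top and bottom (non-striped) configurations. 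Because $u$ and $v$ come from finitely many fixed configurations with integer counts, Perron--Frobenius gives $f(h)\approx \alpha\,7^{h}$ with $\alpha = (u^{\top}w)(z^{\top}v)\,7^{-c}$, where $w$ and $z$ are the Perron eigenvectors. If every row sum is exactly $7$, the all-ones vector is an eigenvector, and both $u^{\top}w$ and $z^{\top}v$ are rational. This gives $\alpha=p/q$ with the $7^{-c}$ factor providing the denominator; the $\approx$ absorbs the subdominant eigenvalue terms and edge effects at small $h$.

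The main obstacle is showing that the local branching factor is exactly $7$ and is uniform across states. Concretely, this means showing that the gluing choices between consecutive striped rings do not depend on anything more than the previous ring, so that there is no long-range constraint such as the net closing up consistently around the cuboid. I would first try to prove it directly from the striped-ring properties in Appendix~\ref{striped_net_proofs} by a case analysis of the relative offset of consecutive stacks. If exact uniformity fails, I would fall back on computing the spectral radius of $T$ for small $k$ and arguing that it is $7$. In that case I would lose exact rationality of $p/q$ unless the Perron eigenvector happens to be rational.
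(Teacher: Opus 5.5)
\textbf{Verdict: there is a genuine gap.} The Perron--Frobenius framing is usable, but two structural points are wrong, and the second one breaks the proposed count.

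\textbf{1. A striped ring is not one stack.} A striped ring runs around the $w\times d$ perimeter and is made of $p(w,d)/4$ stacks. The $N\times 1\times 1$ column wraps around the $h\times d$ perimeter and passes through every ring several times. For example, in the $8\times 5\times 1$ cuboid of area 106, each $12$-tile ring holds three of the $26$ stacks. So between rings $y$ and $y+1$ there are several stack-to-stack junctions, not one. Counting how one $4$-tile stack can sit on the previous one only tells you that each junction has $2L-1=7$ options. What cuts this down to exactly $7$ choices per pair of rings is Theorem~\ref{shift_theorem}: all junctions between two adjacent striped rings are forced to share one shift amount. Combined with Lemma~\ref{shift_growth_theorem}, this is how the paper gets $\sum_{i,j}f_2(i,j)=4\cdot 7^{n-1}$, and your transfer-matrix step needs the same input.

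\textbf{2. There is a long-range constraint, so you cannot show there is none.} You set yourself the task of showing there is no long-range constraint from the net closing up. That cannot succeed, because the constraint exists. By Lemma~\ref{global_shift}, whether the fixed non-striped top and bottom parts can be completed depends on the sum of all ring-to-ring shifts modulo $p(w,d)$, and on the initial shift. If your boundary vectors $u,v$ do not see this accumulated twist, then $u^\top T^{h-c}v$ is the wrong count. The paper handles this by fixing the global shift $i$ and initial shift $j$, counting the non-striped part as $f_1(i,j)-d_1(i,j)$, and using the equidistribution Lemma~\ref{even_dist_global} to get $f_2(i,j)=7^{n-1}/p(w,d)$.

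\textbf{How to repair your framework.} Take the state to be the cumulative shift modulo $p(w,d)$. Then $T$ is the circulant matrix with $T_{ab}=\#\{s\in\{-3,\dots,3\}: s\equiv b-a\}$. The count becomes $e_0^\top T^{n-1}v$, where $v_i=\sum_j\bigl(f_1(i,j)-d_1(i,j)\bigr)$.
\begin{itemize}
\item Row and column sums of $T$ are both $7$, so $\mathbf{1}$ is both the left and the right Perron vector.
\item The other eigenvalues are $1+2\sum_{s=1}^{3}\cos(2\pi ks/p(w,d))$ for $k\not\equiv 0$, and these have modulus strictly below $7$.
\item Hence $e_0^\top T^{n-1}v=7^{n-1}(\mathbf{1}^\top v)/p(w,d)+o(7^{n})$.
\end{itemize}
This is exactly the paper's $K\,7^{n-1}/p(w,d)$. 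It also replaces the variance heuristic behind Lemma~\ref{even_dist_global} with a genuine spectral-gap argument, so the repaired version is more rigorous than the paper's.

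\textbf{Smaller points.}
\begin{itemize}
\item Your fallback worry about losing rationality is unfounded: an integer Perron root of an integer matrix always has rational eigenvectors.
\item Equal row sums alone do not make the left eigenvector $\mathbf{1}$.
\item Your formula for $\alpha$ is missing the normalization by $z^\top w$.
\item The paper's $f(h)$ counts symmetry classes, which brings in the extra factor of roughly $1/4$ from Lemma~\ref{symmetry}.
\end{itemize}
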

\begin{proof}
See Appendix~\ref{striped_net_proofs}
\end{proof}

For cases where the cuboids are not of forms $N\times 1 \times 1\ \&\ M\times (4k+3)\times 3$ or $N\times 1 \times 1\ \&\ M \times (4k+1) \times 1$, the number of nets is not as well understood and seems to consistently be above 0. We have enumerated our results in Appendix~\ref{stacked_results_table}.

\section{Stacked Nets with Area 106}

Previously, the smallest known net that folds into 3 distinct cuboids with a positive volume had an area of 532~\cite{magic_lid}. Here, we present new nets with area 106 that fold into 3 distinct cuboids with positive volume.

\begin{theorem}
    There exist 15 common stacked nets for the 1~$\times$~1~$\times$~26, 1~$\times$~2~$\times$~17, and 1~$\times$~5~$\times$~8 cuboids with surface areas of 106.
\end{theorem}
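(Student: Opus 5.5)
This is an existence (and exact-count) claim established by computer search, so the plan is to prove it by exhaustive enumeration together with independent certification of each net found. The search space is the set of stacked nets of the $1\times1\times26$ cuboid. Each of the 24 middle layers is a ring of 4 unit squares cut open at one of its 4 vertical edges, so it unfolds to a $1\times 4$ strip. Consecutive layers are glued along at least one horizontal edge, and the two $1\times1$ end caps each attach by a single edge. A stacked net is therefore described by a finite word: for each layer, its cut position; for each pair of adjacent layers, the set of glued edges; and for each cap, its attachment edge. These choices are constrained so that the result is a tree of faces, a polyomino, and non-overlapping in the plane. The surface areas agree, since $2(1+26+26)=2(2+17+34)=2(5+8+40)=106$.

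\textbf{Step 1: enumerate the stacked nets of the $1\times1\times26$ cuboid.} I would generate them layer by layer, as in the construction behind Figure~\ref{fig:stripe_diagram}. The horizontal relative offset of each $1\times 4$ strip with respect to the one below is determined by the cut positions and the chosen gluing edge. Overlap can only occur between strips that are close in height, and in a stacked net each row holds a single strip, so planarity reduces to a local check.

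\textbf{Step 2: test each candidate for folding onto the $1\times2\times17$ and $1\times5\times8$ cuboids.} I would do this by the standard folding simulation. Place one face of the net at a position and orientation on the target cuboid's surface grid. Then propagate along the net's adjacency tree, rolling across cuboid edges where required, and accept if every one of the 106 surface cells is covered exactly once. To make this feasible, I would not enumerate all nets first and test afterward. Instead I would interleave: maintain, for every starting placement on each target box, the partial image of the layers built so far, and prune a branch as soon as no placement on either box remains injective. This is the step I expect to be the main obstacle, since the raw count of stacked nets grows roughly like $7^h$-type products over 24 layers. The pruning must be strong and symmetry must be factored out: the dihedral symmetries of the layering cuboid, and the symmetries of each target box, which I would quotient by canonical starting placements. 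I would try the simple prune first (an injectivity failure on all placements), and then add the striped-structure observations from Section~\ref{sec:stacked-net} as heuristics only for ordering the search, not for restricting it, so that the count stays exhaustive.

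\textbf{Step 3: certify the survivors.} I would collect the surviving nets and reduce them modulo the symmetries of the plane polyomino (rotations and reflections) to count distinct nets, expecting exactly 15. For each one, I would independently re-verify, with a separate simple checker, that it is a non-overlapping polyomino and that it folds onto all three cuboids. I would also record explicit folding maps so that a reader can check them by hand or with a physical model. Existence is then witnessed by the listed nets, and the count 15 follows from the exhaustiveness of the pruned search, provided the pruning is sound.
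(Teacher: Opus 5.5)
Your plan matches the paper's argument: the statement rests on an exhaustive computer search over stacked nets of the $1\times1\times26$ box, checked for folding onto the other two boxes and counted modulo plane symmetries, with the 15 nets (two rotationally symmetric) listed in the supplementary files. The appendix table shows the pairwise search with $17\times2\times1$ already leaves only 268 stacked nets, so filtering those against $8\times5\times1$ is a cheaper way to run it than interleaving both prunes.

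There are two slips to fix. First, the box has 26 rings of four squares, not 24, since $4\cdot24+2=98\neq106$. Second, touching edges are glued, so consecutive strips at horizontal offset $s$ share all $4-|s|$ overlapping edges, and the net is generally not a tree of faces. Parametrize each step by $s\in\{-3,\dots,3\}$ instead; these are the $2L-1=7$ shifts, and the next cut position is then forced mod 4. This also removes the redundancy of choosing a single gluing edge.
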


\begin{figure}
\includegraphics[width = \columnwidth, scale=0.4]{Figs/Area106.pdf}
\caption{Sample net of a common unfolding for the cuboids with a surface area of 106}
\label{fig:area 106}
\end{figure}

\begin{proof}
    These are enumerated in \texttt{Figs/stacked\_net\_solution}, where there is a text file describing each solution and pictures of a single solution covering three cuboids. An additional observation is that 2 of them are rotationally symmetric. One of these non-rotationally symmetric solutions is presented in Figure~\ref{fig:area 106}.
\end{proof}

\section{Generalization of the Area 106 nets}

We can now apply stripes introduced in Section~\ref{sec:stacked-net} to the nets of area 106 which can fold onto the surface of three different cuboids with positive volume. These stripes allow us to generalize our results with conjectures~\ref{conj:depth1} and~\ref{conj:depth3}.

\textbf{Definitions:}
\begin{itemize}
\item To condense the writing, let $p(C, D)$ be defined as the perimeter of a rectangle with length C and width D.
\item The stacks of a striped cuboid go around the depth $d$ and height $h$, and the stripes that go around the cuboid go around the width $w$ and the depth $d$. These dimensions are labeled in Figure~\ref{fig:stripe_diagram}.
\end{itemize}

\begin{conj}
\label{conj:depth1}
    Three cuboids with dimensions $N\times 1 \times 1$, $(4q+1)\times 1\times h_1$, and $(4r+1)\times 1\times h_2$ (where $q > r$) have a common striped net if the following conditions are met simultaneously:
    \begin{itemize}
        \item $h_1$ reached a minimum height dependent on $q$ and $r$.
        \item $p(4q+1, 1)/p(4r+1, 1)$ is an odd number
        \item $p(4q+1, 1)p(h_1, 1)=p(4r+1, 1)p(h_2, 1)$
        \item The surface area of $N\times 1 \times 1$ is made to match the surface area of the other cuboids.
    \end{itemize}

\end{conj}

See Appendix~\ref{striped_net_proofs} for theorems and proofs that support this conjecture.

We also found a similar thing for depth 3 cuboids. Specifically, we have the following conjecture:

\begin{conj}
\label{conj:depth3}
    Three cuboids with dimensions $N\times 1 \times 1$, $(4q+3)\times 3\times h_1$, and $(4r+3)\times 3\times h_2$ (where $q > r$) have a common striped net if the following conditions are met simultaneously:
    \begin{itemize}
        \item $h_1 > 2$, $h_2 > 2$, and $h_1$ reached a minimum height dependent on $q$ and $r$.
        \item $p(4q+3, 3)/p(4r+3, 3)$ is an odd number
        \item $p(4q+3, 3)p(h_1, 3)=p(4r+3, 3)p(h_2, 3)$
        \item The surface area of $N\times 1 \times 1$ is made to match the surface area of the other cuboids.
    \end{itemize}
\end{conj}

Although not fully proven, these conjectures help reveal locations in the $w$, $h$, and $d$ space where solutions are more densely populated. Following the conjectures, Tables~\ref{tab:depth-1-results} and~\ref{tab:depth-3-results} list the number of solutions that we found for common nets between an $N\times 1\times 1$ cuboid and cuboids of depth 1 and 3, respectively.

\begin{threeparttable}
\begin{tabular}{||c | c| c||} 
 \hline
 Cuboid Shape & \makecell{Number of three-way striped solutions\\ with $M\times 5\times 1$ and $N\times 1\times 1$} \\ \hline
 17 $\times$ 2 $\times$ 1 & 15 \\ \hline
 17 $\times$ 3 $\times$ 1 & 161 \\ \hline
 17 $\times$ 4 $\times$ 1 & 1,387 \\ \hline
 17 $\times$ 5 $\times$ 1 & 10,884 \\ \hline
 17 $\times$ 6 $\times$ 1 & 82,794  \\ \hline
 17 $\times$ 7 $\times$ 1 & 613,862 \\ \hline
\end{tabular}
\begin{tabular}{||c | c|} 
 \hline
 Cuboid Shape & \makecell{Number of three-way striped solutions\\ with $M\times 9\times 1$ and $N\times 1\times 1$}  \\ \hline
 29 $\times$ 2 $\times$ 1 & 0 \\ \hline
 29 $\times$ 3 $\times$ 1 & 11  \\ \hline
 29 $\times$ 4 $\times$ 1 & 246  \\ \hline
 29 $\times$ 5 $\times$ 1 & 2,854  \\ \hline
\end{tabular}
 \caption{Results for the number of three-way striped solutions for $N\times 1\times 1$ and cuboids with depth 1.}
 \label{tab:depth-1-results}
 \end{threeparttable}

\begin{threeparttable}
\begin{tabular}{||c | c| c||} 
 \hline
 Cuboid Shape & \makecell{Number of three-way striped solutions\\ with $M\times 3\times 3$ and $N\times 1\times 1$} \\ \hline
 3 $\times$ 15 $\times$ 3 & 12 \\ \hline
 4 $\times$ 15 $\times$ 3 & 231   \\ \hline
 5 $\times$ 15 $\times$ 3 & 2,092 \\ \hline
 4 $\times$ 47 $\times$ 3 & 231 \\ \hline
 5 $\times$ 47 $\times$ 3 & 2,092 \\ \hline
\end{tabular}
\end{threeparttable}
\\
\\
\begin{threeparttable}
\begin{tabular}{||c | c| c||} 
 \hline
 Cuboid Shape & \makecell{Number of three-way striped solutions\\ with $M\times 7\times 3$ and $N\times 1\times 1$} \\ \hline
 4 $\times$ 27 $\times$ 3 & 8 \\ \hline
 5 $\times$ 27 $\times$ 3 & 264   \\ \hline
 4 $\times$ 47 $\times$ 3 & 8 \\ \hline
 5 $\times$ 47 $\times$ 3 & 392 \\ \hline
\end{tabular}
  \caption{Results for the number of three-way striped solutions for $N\times 1\times 1$ and cuboids with depth 3.}
 \label{tab:depth-3-results}
\end{threeparttable}

The files in \texttt{Figs/semi\_striped\_net\_solution} show a net covering the three cuboids with dimensions $58\times 1\times 1$, $3\times 3\times 18$, and $15\times 3\times 4$. We chose those dimensions because they are the smallest dimensions that work for depth 3 while satisfying the requirement that all of the cuboids are of different dimensions.

Note that because the number of stacked nets covering the three cuboids of dimension: $26 \times 1 \times 1$,  $8\times 5\times 1$, and $17\times 2\times 1$, matches the number of striped nets covering those dimensions, all stacked nets covering 3 cuboids of area 106 are striped nets.

\section{Simple-Phase nets and the Matrix Equation}

After finding stacked nets that cover three cuboids, we decided to slightly widen the search from stacked nets to $N\times 1 \times 1$ nets where the top and bottom $1\times 1$ tiles of the $N\times 1 \times 1$ cuboid only have one neighbor (\emph{i.e.} the \defn{simple-phase nets}).

After some experimentation, it was observed that every 4 tile 'band' or 'layer' in between the top and bottom $1\times 1$ sides could only be in one of seven 'layer states', and whether or not state $i$ could exist on top of state $j$ depends only on state $j$. To see a picture that is a simple-phase net that contains all seven 'layer states', see Figure \ref{simple-phase net all 7}.

\begin{figure}
\centering
\includegraphics[width=50mm,scale=1.0]{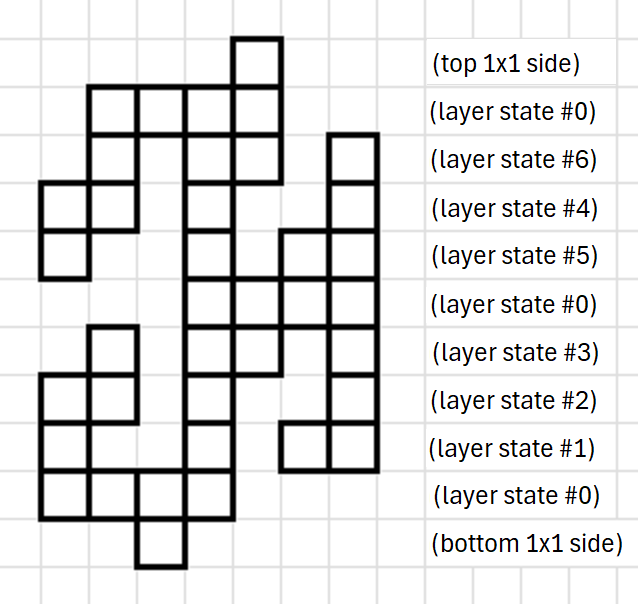}
\caption{Example of a simple-phase net that uses all 7 layer states. They are distinct from each other because of the different order in which tiles appear starting from the leftmost tile, and by how the different 'tile islands' connect with each other from underneath the layer state.}
\label{simple-phase net all 7}
\end{figure}

These observations led to the idea that the enumeration of this type of solution, while counting symmetric solutions as distinct, could be counted via a matrix equation. The way to understand the matrix is by reading the tile in row $i$ and column $j$ as "the number of ways we could add layer state $i$ onto layer state $j$". Also note that the first layer must be layer state 0 because all the other layer states will need at least 1 layer below it to fully connect them, and by symmetry, the same argument works for the last layer.

Here is the matrix equation with all 7 states, where $n$ is the number of layers and $n\geq1$:
$$
F(n) = 16
\begin{pmatrix}
1\\
0\\
0\\
0\\
0\\
0\\
0\\
\end{pmatrix}^\intercal
\begin{pmatrix}
7 & 2 & 1 & 2 & 2 & 1 & 2\\
1 & 1 & 1 & 1 & 0 & 0 & 0\\
2 & 1 & 1 & 1 & 0 & 0 & 0\\
2 & 1 & 1 & 1 & 0 & 0 & 0\\
1 & 0 & 0 & 0 & 1 & 1 & 1\\
2 & 0 & 0 & 0 & 1 & 1 & 1\\
2 & 0 & 0 & 0 & 1 & 1 & 1\\
\end{pmatrix}
^{n-1}
\begin{pmatrix}
1\\
0\\
0\\
0\\
0\\
0\\
0\\
\end{pmatrix}
$$

The largest eigenvalue is approximately equal to 9.4956 and is the real solution to the following equation: $\lambda^3-10\lambda^2+5\lambda-2=0$. That means that every newly added layer will increase the number of nets by close to a multiple of that amount. Armed with this information, we changed the stack search algorithm to accommodate these 7 types of layers.

\subsection{Basic Observations of the Results}
\begin{enumerate}
\item The number of simple-phase nets appear to be a scaled-up version of the number of stacked nets.
\item We were unable to verify whether the area-102 cuboids ($25\times 1\times 1$) have a simple-phase net that covers three cuboids. Fortunately, even though we could not find the number of results for '$23\times 1\times 1$ \& $7\times 5\times 1$' and '$26\times 1\times 1$ \& $8\times 5\times 1$', we were still able to check if there was a net that covers at least 3 cuboids by checking the results of the other runs against the $7\times 5\times 1$ cuboid and the $8\times 5\times 1$ cuboid.
\item See Appendix \ref{stacked_results_table} for the full table of results.
\end{enumerate}
\subsection{The 3-way Solutions}
We found a total of 40 simple-phase nets of area 106 that cover the $26\times 1\times 1$ \& $17\times 2 \times 1$ \& $8\times 5\times 1$ cuboids, and 4 of those 40 are rotationally symmetric. In other words, we found 25 new nets that are not stacked nets. See the files in Figs/simple\_phase\_solution for the ASCII representation of the solutions. An interesting fact about these solutions is that after visually inspecting them, we conjecture that all of the $17\times 2 \times 1$ \& $8\times 5\times 1$ cuboid solutions are striped if you give the layers of the $26\times 1\times 1$ cuboid a certain colour pattern.

\section{Thorough Search for Nets}

In this section, we have the results of using a thorough algorithm to find nets. Because that algorithm turns out to be 3 to 10 times slower than what was done by Qian et al. in 2025~\cite{qian2025unfoldingboxeslocalconstraints}, it did not update the lowest possible area a net that covers three cuboids could have. Despite that, we believe this section still has value because the algorithm used is very different from their SAT solver algorithm, and because we found non-standard results.

\section{How the Net Searches Were Done}

\subsection{Exploration with the Breadth-First Search Iterator Algorithm}

The idea of this algorithm is to search for nets by trying every possible way in which a breadth-first search algorithm can explore a net in the shape of a polyomino. If we do that, we will have searched all non-overlapping nets. Here is a simple version of the algorithm that does not include wrapping around a model of the cuboid:

\definecolor{codegreen}{rgb}{0,0.6,0}
\definecolor{codegray}{rgb}{0.5,0.5,0.5}
\definecolor{codepurple}{rgb}{0.58,0,0.82}
\definecolor{backcolour}{rgb}{0.95,0.95,0.92}

\begin{lstlisting}[language=Python, basicstyle=\tiny, label=BFS Iterator,title=BFS Iterator,backgroundcolor=\color{backcolour},     commentstyle=\color{codegreen},
    keywordstyle=\color{magenta},
    numberstyle=\tiny\color{codegray},
    stringstyle=\color{codepurple},]
net_search_start(max_size):
    Define Mapping <Int to Int> tile_to_ordering
    cur_size = 0, cur_order_index = 0, cur_rot_index = 0
    tile_to_ordering.put(cur_order_index, cur_size)
    insert_initial_tile()
    cur_size = cur_size + 1

    net_search(tile_to_ordering, max_size, cur_size,
    cur_order_index, cur_rot_index)
    
net_search(Mapping <Int to Int> tile_to_ordering,
    max_size, cur_size, cur_order_index, cur_rot_index):

    if max_size == cur_size:
        process_solution()
        return

    for i = cur_order_index to cur_size:

        curTile = ordering_to_tile(i)
        
        foreach rot from 0 to 3 inclusive:
            if i == cur_order_index and cur_rot_index > rot:
                # Skip iteration if rot before BFS queue
                continue
            
            neiTile = getNeiOnFlatPaper(curTile, rot)

            # If neiTile has a neighbour that was inserted
            # before curTile, it was already explored.
            if neighbouring_location_non_empty() or
               cur_tile_being_considered_was_already_explored
                (curTile, neiTile, tile_to_ordering, i, rot):
                continue

            # Try inserting neighbour:
            insert_tile(curTile, neiTile, rot)
            tile_to_ordering.put(neiTile, cur_size)
            cur_size = cur_size + 1

            # Recursive call:
            Net_search(tile_to_ordering, max_size,
            cur_size, i, rot)
            
            # Undo inserting neighbour:
            cur_size = cur_size - 1
            remove_tile(neiTile)
            tile_to_ordering.remove(neiTile)
\end{lstlisting}

To summarize the code above, at every level, this algorithm tries every way the BFS algorithm could add the next tile by adding it, pretending that it was the next added in the BFS algorithm and going one level deeper into the recursion. It has the benefit of not needing to save previous configurations. This means that we can easily check trillions of configurations without worrying about memory overhead. For an example of a specific polyomino solution from the previous algorithm if the size was set to 10, see Figure \ref{fig:bfs_iterator}.

\begin{figure}\label{BFS-iterator-drawing}
\centering
\includegraphics[width=50mm,scale=0.5]{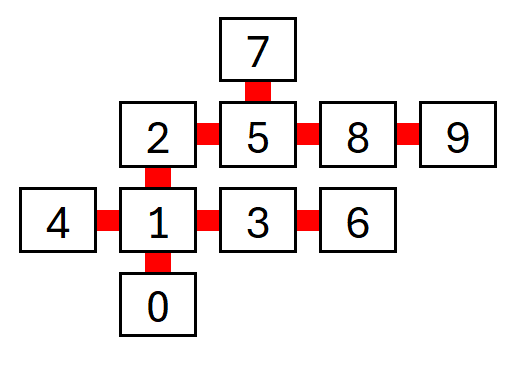}
\caption{If we have the breadth-first search algorithm start at the bottom, the numbers on the tiles indicate the order in which the tiles are found, and the red links indicate from where each tile was found. It is also the order and links in which the tiles are recursively found in the BFS-iterator algorithm for this polyomino with 10 tiles. Any other ordering for this polynomino starting from tile index 0 will be blocked at some point.}
\label{fig:bfs_iterator}
\end{figure}

\subsection{The Connection to Redelmeier's Algorithm}

The BFS iterator algorithm and Redelmeier's algorithm~\cite{Redelmeier-Counting-Polyominoes} are closely linked. A shallow link to Redelmeier's algorithm is that both algorithms could enumerate the number of polyominoes with $n$ tiles without exponential memory overhead, but there is a deeper link.
The deeper link is the observation that, if one wished, one could direct Redelmeier's algorithm to count the polyominoes in the same order as the BFS iterator algorithm. If you interpret Redelmeier's algorithm's untried set as the currently unexplored neighbours of the BFS iterator algorithm, and remove the next element of the untried set as you would add a new neighbouring tile with the BFS iterator algorithm, we have noticed that the order in which the algorithms find polyominoes will match.

\subsection{Other Applications to the Breadth-First-Search Iterator}

We feel that the BFS iterator algorithm is the hidden gem of this paper. The simplicity and effectiveness of the algorithm suggest that it should be the baseline solution to many counting problems in which there are some indistinguishable pieces that are stuck together. To be more specific, example problems this algorithm could tackle include enumerating polycubes, enumerating 3D nets covering hypercuboids, finding solutions to the "Cubigami 7" puzzle originally developed by Knuth and Miller, enumerating the ways to make a bigger shape given a set of smaller pieces, etc...

\subsection{Including Cuboid Nets}
To get this algorithm to work with cuboids, we just had to build the cuboid nets as the algorithm progresses, while making sure that unrelated edges of the cuboid do not touch each other. This algorithm requires all cuboids to have a starting tile and rotation for the root tile. To be thorough, we could fix the start location and start rotation for the first cuboid, but then we cannot make assumptions about the next cuboids, so we will have to try every symmetrically distinct combination of start locations and rotations for all of the cuboids except for the first one.

\subsection{Results}

\subsubsection{$N\times 1 \times 1$ Solutions}

\begin{table}[]
    \centering
\begin{tabular}{|c | c |c |} 
\hline
N & Num Nets & New?\\
\hline
1 & 11 & No\\
\hline
2 & 723 & No \\
\hline
3 & 14,978 & No \\
\hline
4 & 228,547 & No \\
\hline
5 & 3,014,430 & Yes \\
\hline
6 & 36,175,381 & Yes \\
\hline
7 & 407,023,305 & Yes \\
\hline
\end{tabular}
    \caption{List of the number of nets (no cuts) that fold to a $N\times 1 \times 1$ cuboid}
    \label{tab:Nx1x1-list}
\end{table}

Table~\ref{tab:Nx1x1-list} lists the number of normal nets (\textbf{without invisible cut between edges}) that fold the $N\times 1 \times 1$ cuboid. For $N\le4$, we match the results found by Qian et al.~\cite{qian2025unfoldingboxeslocalconstraints}. The rest of the results are new results.

For the $N=6$ and $N=7$ cases, we had to avoid a memory overflow by having the algorithm look at each solution and logically deduce whether it is a duplicate version of what was seen before instead of just saving all of the solutions to a hash table and doing a lookup.

\subsubsection{$N\times 1 \times 1$ and Other Cuboid(s) Solutions}
\label{sec:small_area_results}

We reproduced the results in Qian et al.~\cite{qian2025unfoldingboxeslocalconstraints} for cuboids of area 46 or less. 
The only new result we have is that  we found that the number of nets (\textbf{without invisible cut between edges}) covering the $11\times 1\times 1$ cuboid and the $5\times 3\times 1$ cuboid is 4,233,426. It took around 3 weeks of computation to get this result.

\subsection{Non-standard Solutions}

Historically, there are two main ways to define a polyomino net, and they lead to different counts. The first way insists that if the edges of the net are touching, they stick together. The second way to count is to allow edges of the net to touch while not actually being stuck together if having them stuck together stops the net from covering the cuboids. According to Mitani and Uehara's paper \cite{mitani2009polygons}, necessary cuts in the net could happen if there is an apparent 'hole' in the net.

For counting nets using the second way to count, if you want to match the 1080 result in Dawei Xu et al.~\cite{Xu-Uehara-area-30} and the edge touching results in Qian et al.~\cite{qian2025unfoldingboxeslocalconstraints}, you need to follow these two unstated rules: the necessary cuts in the net must be the same for all cuboids involved, and the information about where the cuts are located will not be used to distinguish two nets. Just for fun, we decided to break these rules so that we could see what would happen. When we forced a net with a hole to need one way to cut the net to cover the $5\times 1\times 1$ cuboid, and another way to cut the net to cover the $3\times 2\times 1$ cuboid, we found 36 new solutions. See  the Figs/nets\_needs\_2\_ways\_to\_cut/ folder for the full set of solutions. When we looked for nets with holes where there are at least two valid ways to cut the net before being able to cover both the $3\times 2\times 1$ cuboid and the $5\times 1\times 1$, we found 4 nets with holes with 2 distinct ways to cut it up, none for the $3\times 3\times 1$ and $7\times 1\times 1$ cuboids, and 3 for the $5\times 2\times 1$ and $8\times 1\times 1$ cuboids. See the Fig/nets\_having\_two\_ways\_to\_cut folder for these results. The existence of such nets answers the open question posed in Section 3.1.1 of Ryuhei Uehara's book~\cite{intro_comp_origami}.

\section{Open Questions Raised While Studying The Results}

\subsection{Can the simple-phase search algorithm for $N\times 1 \times 1$ cuboids algorithm be extended to include all $N\times 1 \times 1$ nets?}

While doing the simple-phase search, we found that the search can be seen as a depth-first search where the types of layer you could add on are mapped out by a $7\times 7$ matrix. We conjecture that the full search of all polyomino nets covering the $N\times 1 \times 1$ cuboids could also be mapped out by a bigger matrix. Once the matrix is found, we could potentially use it to speed up the search for nets that fold into the $N\times 1 \times 1$ cuboid and other cuboids. We also conjecture that the top eigenvalue of that matrix will match the top eigenvalue of the $7\times 7$ matrix, which is approximately 9.496... Realistically, the matrix might be an order of magnitude bigger than the $7\times 7$ matrix, so deriving that matrix may not be easy. We feel that once it is achieved, we can go further and have a matrix for other $N\times A\times B$ cuboids where A and B are constants and N is a variable.

\subsection{Are there two cuboid shape with the same surface area, but no common nets?}\label{sec:no-common-net}

Every time we checked to see if there were nets that covered the $N\times 1 \times 1$ cuboid and some other cuboid, we found nets. We are not sure whether this remains the case for any two cuboids sharing the same area. For this reason, we think it would be interesting if the $5\times 5 \times 5$ cuboid does not share a net with the $37\times 1\times 1$ cuboid. More generally, we are open to the possibility that if the minimum face-perimeter of one shape is lower than all three dimensions of the other shape, then we will not be able to find common nets. Qian et al. also investigated this question and found that the area of the smallest area cuboids not sharing a net must be greater than 86 \cite{qian2025unfoldingboxeslocalconstraints}. We hope that we may find a human-readable proof of the non-existence of common nets. Although it may be dubious, we are encouraged by the fact that we do not need a computer to prove that the $0\times 1 \times n$ 'cuboid' where $n>10$ cannot fold into a doubly-covered T-shape. 

\section{Conclusion}

The search for nets that cover 3 or 4 cuboid shapes, and the results we found along the way, has opened up many different and interesting questions that will hopefully be explored in future research. We especially look forward to getting answers to the two main questions that remain open:

\begin{enumerate}
\item  What is the smallest possible net by unit area in which a polyomino could be used to cover 3 cuboid shapes with orthogonal folds?
\item  Is there a polyomino net that covers 4 cuboid shapes with orthogonal folds?
\end{enumerate}

Getting an answer to the first question seems somewhat promising because we only need to search for nets of area 62 (\emph{i.e.} $7\times 3\times 1$, $5\times 3\times 2$, and $15\times 1\times 1$) up to area 102 (\emph{i.e.} $7\times 3\times 3$, $9\times 3\times 2$, $12\times 3\times 1$, and $25\times 1\times 1$), and we may just need more compute, faster algorithms, and a generalization of the simple-phase algorithm. Although it may not be enough, we feel that we have a plan to follow. Getting an answer to the second question is more rewarding, but also seems to be more challenging because it will probably require a creative leap which comes from a novel idea or a creative recombination of what is already known.


\small
\bibliographystyle{abbrv}

\newpage
\section*{Appendix}

\renewcommand{\thesubsection}{\Alph{subsection}}

\subsection{Enumeration of Stacked and Simple-Phase $N\times 1 \times 1$  nets also Covering 1 Other Cuboid}\label{stacked_results_table}

\begin{tabular}{||c | c | c | c | c||} 
 \hline
 \makecell{Main\\Cuboid}&\makecell{Other\\Cuboid} &\makecell{Number of\\Stacked Nets}& \makecell{Number of\\Simple-Phase\\Nets}\\ [0.5ex] 
 \hline\hline
 5x1x1 & 3x2x1 & 7 & 34 \\ \hline
 7x1x1 & 3x3x1 &  2 & 6 \\ \hline
 8x1x1 & 5x2x1 & 109 & 197 \\ \hline
 9x1x1 & 4x3x1 & 12 & 33 \\ \hline
 10x1x1 & 3x3x2 & 133 & 158 \\ \hline
 11x1x1 & 5x3x1 & 591 & 1,401 \\ \hline
 11x1x1 & 7x2x1 & 19 & 54 \\ \hline
 13x1x1 & 3x3x3 & 175 & 203 \\ \hline
 13x1x1 & 6x3x1 & 6 & 77 \\ \hline
 14x1x1 & 5x4x1 & 4,182 & 13,264 \\ \hline
 14x1x1 & 9x2x1 & 152 & 355 \\ \hline
 15x1x1 & 5x3x2 & 184 & 507 \\ \hline
 15x1x1 & 7x3x1 & 9 & 31 \\ \hline
 16x1x1 & 4x3x3 & 1,285 & 1,365 \\ \hline
 17x1x1 & 5x5x1 & 28,817 & 124,822 \\ \hline
 17x1x1 & 8x3x1 & 17 & 54 \\ \hline
 17x1x1 & 11x2x1 & 29 & 89 \\ \hline
 19x1x1 & 5x3x3 & 8,418 & 11,290 \\ \hline
 19x1x1 & 7x4x1 & 27 & 52 \\ \hline
 19x1x1 & 9x3x1 & 951 & 2,657 \\ \hline
 20x1x1 & 6x5x1 & 202,106 & 1,187,120 \\ \hline
 20x1x1 & 7x3x2 & 296 & 581 \\ \hline
 20x1x1 & 13x2x1 & 211 & 507 \\ \hline
 21x1x1 & 10x3x1 & 11 & 122 \\ \hline
 22x1x1 & 5x5x2 & 24 & 163 \\ \hline
 22x1x1 & 6x3x3 & 58,891 & 105,801 \\ \hline
 23x1x1 & 5x4x3 & 61 & 222 \\ \hline
 23x1x1 & 7x5x1 & 1,411,798 & ?? \\ \hline
 23x1x1 & 11x3x1 & 15 & 51 \\ \hline
 23x1x1 & 15x2x1 & 28 & 150 \\ \hline
 24x1x1 & 9x4x1 & 6,853 & ?? \\ \hline
 25x1x1 & 7x3x3 & 410,329 & ?? \\ \hline
 25x1x1 & 9x3x2 & 64 & ?? \\ \hline
 25x1x1 & 12x3x1 & 28 & 81 \\ \hline
 26x1x1 & 17x2x1 & 268 & 674 \\ \hline
 26x1x1 & 8x5x1 & 9,885,286 & ?? \\ \hline
 27x1x1 & 13x3x1 & 1,310 & ?? \\ \hline
 27x1x1 & 5x5x3 & 154 & ?? \\ \hline
 27x1x1 & 7x6x1 & 50 & ?? \\ \hline
 28x1x1 & 8x3x3 & 2,870,327 & ?? \\ \hline
 29x1x1 & 7x5x2 & 165 & ?? \\ \hline
 29x1x1 & 11x4x1 & 36 & ?? \\ \hline
\end{tabular}

\subsubsection{Commentary on the Enumeration Table}

We only understand the counts for the combinations of dimensions described in Section \ref{sec:stacked-net}. The reason we shared the full list is to encourage and help others develop stronger algorithms that could verify these numbers and then possibly verify Section \ref{sec:no-common-net}'s claim that there is no net that covers both the $5 \times 5 \times 5$ and $37 \times 1 \times 1$ cuboids.

\subsection{Proofs and Information Related to Striped Cuboids}\label{striped_net_proofs}

\subsubsection{How to create the 3x(3+4k)xN Striped Cuboid}
The $M\times 3\times 3$ cuboids could also have a striped pattern, but the top and bottom $3\times 3$ sides are slightly more complicated. In the $M\times 3\times 3$ case, the bottom $1\times 1$ tile starts at the first layer above the bottom $3\times 3$ side, stacks up through the top $3\times 3$ side only to go back down through bottom $3\times 3$ side, and then it goes up to the $1\times 1$ tile on the layer right below the top $3\times 3$ side. Because it is periodic, the layers between the first layer and last layer above the $3\times 3$ side could have a striped pattern. This observation could also be generalized to all cuboids of the form $M\times (4k + 3) \times 3$. We just need to know that as $k$ grows, the number of times the $N\times 1 \times 1$ stacks go around the $M\times 3$ side grows with it.

\subsubsection{Definitions}
\begin{itemize}
    \item The layering cuboid is the long cuboid in which one of its face perimeters defines the width of the stacks in the stacked net. In this paper, the layering cuboid is always the $N\times 1\times 1$ cuboid.
    \item An $L\times 1$ layer represents one of the stacks of the layering cuboid. If the layering cuboid is the $N \times 1 \times 1$ cuboid, then the width is the perimeter of the $1 \times 1$ side, which is 4.
    \item The shift amount is the amount of tiles that a $L\times 1$ layer shifts sideways compared to a previous $L\times 1$ layer.
    \item The global shift amount is the total amount of tiles shifted sideways from the first striped ring to the last striped ring reduced modulo the perimeter of $w \times d$.
    \item The initial shift amount is the amount of tiles horizontally to right of the top left corner of the $h \times w$ side the first $L \times 1$ layer to be completely inside of the $h \times w$ is. By this definition, the possible values only range from $0$ to $(L-1)$.
\end{itemize}

\begin{lemma}
    The equivalent of the initial shift amount, but for the bottom left is determined by the global shift amount and the initial shift amount.
\end{lemma}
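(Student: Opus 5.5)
I propose to set up coordinates on the striped cuboid and track the horizontal offset of a fixed reference point of the $L\times 1$ layers as they wind around the $w\times d$ perimeter. Let $P = p(w,d) = 2(w+d)$ be the length of a striped ring. Unroll the sides of the cuboid into an infinite horizontal strip of circumference $P$, with coordinate $x \in \mathbb{Z}/P\mathbb{Z}$ measured from the top left corner of the $h\times w$ side. Each $L\times 1$ layer lies in this strip and is determined by its left endpoint $x$. The initial shift amount $s_0$ is, by definition, the position (relative to the top left corner of the $h\times w$ side) of the first layer lying completely inside that side, so $s_0 \in \{0,\dots,L-1\}$.

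The key steps are as follows.

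\begin{enumerate}
\item Observe that, by the stacking structure of a stacked net, each successive layer is obtained from the previous one by a horizontal shift.
\item Note that within a striped ring the layers are consecutive along the ring.
\item Sum these shifts from the first striped ring to the last. By definition this sum, reduced mod $P$, is the global shift amount $g$.
\item Conclude that the left endpoint of the corresponding layer in the last striped ring is at position $s_0 + g \pmod P$, measured in the same coordinate.
\end{enumerate}

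Next I translate this into a position relative to the bottom left corner of the $h\times w$ side. The bottom left corner sits at the same horizontal coordinate as the top left corner, since the side is a rectangle and the strip coordinate is vertical-invariant. The last layer completely inside the $h\times w$ side on the bottom therefore has offset $s_0+g$ reduced mod $P$. We then take the first position in $\{0,\dots,L-1\}$ that is congruent to a layer endpoint in that ring. Since the layers in a ring tile the ring in blocks of $L$ (stripes are periodic with period $L$ horizontally), the bottom offset is $(s_0+g) \bmod L$, after reducing mod $P$ first. This requires $L \mid P$, or else an explicit reduction of the form "the unique representative in $[0,L-1]$ of the residue class of $s_0+g$ among endpoints." Either way, the bottom offset is a function of $(s_0,g)$ alone, which is the claim.

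The main obstacle is justifying that every ring's layers occupy a consistent lattice of endpoints, so that "first layer completely inside" at the bottom is well defined and depends only on the endpoint residue. I expect that the cuboid's other dimensions and the ring parity cannot introduce additional dependence. I would first try to prove this from the definition of a striped cuboid: each ring is monochromatic and is covered by whole layers. If that fails when $L\nmid P$, I would state the dependence as through the residue $s_0+g \bmod P$. I would then argue that the first fully-contained layer position is determined by scanning from that residue, which is still a function only of $s_0$ and $g$.
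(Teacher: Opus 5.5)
Your proposal is correct and follows the paper's route: the bottom offset is $s_0+g$ reduced modulo $p(w,d)$ and then modulo $L$, so it depends only on $(s_0,g)$. The worry about $L\nmid p(w,d)$ goes away, because each striped ring is tiled exactly by whole $L\times 1$ layers; this forces $L \mid p(w,d)$ and puts all layer endpoints of a ring in one residue class mod $L$, which also removes your stated main obstacle.
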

\begin{proof}
By the definitions, the shift amount of the bottommost striped ring could be determined by adding the global shift amount to the initial shift amount modulo the perimeter of $w\times d$ and then reducing that modulo the layer width $L$.
\end{proof}

\begin{theorem}\label{shift_theorem} \textbf{(Unique Shift Between Striped Rings Theorem)}\\
There is only one unique 'shift amount' for all the $L\times 1$ layers connected to each other while being part of 2 striped ring.
\end{theorem}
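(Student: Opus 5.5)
The statement is informal, so I will first fix a concrete reading of it. Consider a striped cuboid of dimensions $w\times d\times h$ together with the stacked $N\times 1\times 1$ net. Each striped ring is a band of height one that goes around the $w$ and $d$ faces, and it has perimeter $P=p(w,d)$. Each ring is tiled by consecutive $L\times 1$ layers of the layering cuboid, where $L=4$. Two layers that lie in adjacent striped rings and are glued together in the net are offset horizontally by some integer $s$, the shift amount. The claim to prove is that $s$ is the same for every connected pair of layers, taken modulo $P$. My approach is to argue locally: first show that the geometry of one layer sitting on the one below it pins down $s$, and then use connectivity of the stack to carry that value along.

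\textbf{Step 1: the rigidity of a stacked net.} In a stacked net, layer $i+1$ is attached to layer $i$ along a common horizontal boundary. The tiles of both layers are consecutive along the folded $1\times 1$ perimeter, so the attachment is an unbroken run of edges. When the net is folded onto the striped cuboid, that run must be mapped isometrically onto the seam between two adjacent rings. The map goes along the grid, and its horizontal direction is fixed by the orientation of the ring. Consequently, the offset between the left end of layer $i$ and the left end of layer $i+1$ in the net equals their offset on the cuboid. That offset is what the definitions call the shift amount.

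\textbf{Step 2: consistency around a ring.} Consider the sequence of layers that fill a single ring; they occupy $P/L$ consecutive slots, where $L\mid P$ in the striped setting. Consecutive layers in this sequence are glued to one another inside the net as well. Therefore, whenever one of these layers is shifted by $s$ relative to the ring below, the layer next to it in the ring must be shifted by the same $s$. If it were not, its tiles would either overlap the neighbouring layer's tiles or leave a gap in the ring. Either outcome contradicts the fact that the ring is covered exactly once.

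\textbf{Step 3: propagation.} The layers that belong to two striped rings form a connected set, and Step 2 gives the same $s$ across each adjacency within it. Induction along any path of adjacent layers then shows that $s$ is constant on the whole set. The lemma above, which relates the initial shift amount to the corresponding bottom-left amount, serves as a consistency check: the global shift amount becomes $(\text{number of ring transitions})\cdot s \bmod P$.

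\textbf{Main obstacle.} I expect Step 2 to be the hardest part, because it is where the phrase "connected while being part of 2 striped rings" has to be made precise. A layer can wrap around a corner of the $w\times d$ perimeter, and a layer can straddle two rings. In both situations I must rule out the possibility that the fold uses a different offset on different faces. I would first try to settle this by unrolling the ring into a cylinder of circumference $P$. On that cylinder, each layer is a horizontal segment of length $L$, and the whole ring is a tiling of the circle by such segments. Once the cuboid's corners are removed from the picture in this way, the uniqueness of $s$ should follow from a one-dimensional tiling argument.
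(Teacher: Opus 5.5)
There is a genuine gap in Step~2. Your overall skeleton, a local forcing step followed by propagation all the way around the ring, matches the paper's proof. The forcing mechanism you invoke, however, does not exist. You assume that layers lying side by side in one striped ring are glued to each other in the net, and in a stacked net this never happens. The net is a single column with one $L\times 1$ layer per row. Layer $i$ is glued only to layers $i-1$ and $i+1$, along its long edges, so on the cuboid these neighbours sit in the rings directly below and above it, never beside it. The $P/L$ layers filling a striped ring come from different passes of the stack around the $h\times d$ loop; for the area-106 nets on the $8\times 5\times 1$ cuboid, each ring holds $p(5,1)/4=3$ layers from three different passes. Their short ends meet along cut edges. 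So nothing rigid carries $s$ from a layer to its ring-neighbour. The layers of two adjacent rings also do not form a connected set under gluing; they form disjoint glued pairs, which leaves Step~3 nothing to induct along.

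Your fallback, a one-dimensional tiling argument on the unrolled cylinder, is too weak as well. Since each ring is cut into arcs of length $L$, it only shows that all start positions in $R$ lie in one residue class mod $L$, and likewise in $R'$. Write $r\in\{0,\dots,L-1\}$ for the difference of these classes and use $|s|\le L-1$. Then every shift is $r$ or $r-L$, and when $r\neq 0$ tiling alone cannot choose between them (for $L=4$: $s=1$ versus $s=-3$). For example, suppose a layer has shift $r-L$ and the next layer along the ring has shift $r$. The gap this opens in $R'$ has length exactly $L$ and is simply filled by a third layer, so ``the ring is covered exactly once'' is not violated.

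The missing ingredient is that the stack induces a perfect matching between the layers of $R$ and those of $R'$. Each layer of $R$ has exactly one stack-neighbour in $R'$, namely the one glued along its long edge on that seam, and vice versa. With this, the paper's one-line forcing argument becomes rigorous. Let $A$ sit at position $a$ in $R$, and let $B$ be the next layer along the ring, at $a+L$.
\begin{itemize}
\item If $A$ has shift $r$ and $B$ has shift $r-L$, then $B$ is matched to $A$'s partner at $a+r$, a contradiction.
\item If $A$ has shift $r-L$ and $B$ has shift $r$, then the layer of $R'$ at $a+r$ has no partner left: its only candidates, at $a$ and $a+L$, are both taken.
\end{itemize}
These are exactly your ``overlap'' and ``gap'' cases, but the contradiction is with the matching, not with the covering of the ring. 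Going around the cycle then gives a single $s$.

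A smaller point concerns Step~1. On downward passes, folding reverses both the vertical and the horizontal direction of the net relative to the ring. So the net offset from layer $i$ to layer $i+1$ equals the offset on the cuboid only when the latter is measured from the lower ring to the upper one.
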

\begin{proof}
If we focus on how two $L\times 1$ layers connect between two striped rings, we see that it will force the two $L\times 1$ layers adjacent to it to connect the same way, and that will force the next two, and so on all the way around.\end{proof}

\begin{lemma}\label{shift_growth_theorem}
There is $2L - 1$ ways two striped rings could be connected to each other.
\end{lemma}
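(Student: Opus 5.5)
We count the possible shift amounts $s$ between two consecutive striped rings. By the Unique Shift Between Striped Rings Theorem (Theorem~\ref{shift_theorem}), one value of $s$ determines how every pair of adjacent $L\times 1$ layers across the boundary is joined. So it suffices to count the admissible shifts for a single pair: a layer $A$ in the lower ring and the layer $B$ directly above it in the upper ring. Both are horizontal strips of $L$ consecutive tiles lying on adjacent rows of the striped cuboid's side.

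Place $A$ at columns $0,\dots,L-1$ and $B$ at columns $s,\dots,s+L-1$. Two requirements constrain $s$.

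\emph{Adjacency.} Since the net is a polyomino, the stacks joining the two rings must share at least one edge. In the stacked net of the $N\times 1\times 1$ cuboid, consecutive stacks are glued along a horizontal edge. Two length-$L$ intervals offset by $s$ share a horizontal edge exactly when $|s|\le L-1$. This gives the candidates $s\in\{-(L-1),\dots,0,\dots,L-1\}$, which number $2L-1$.

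\emph{Realizability.} Each of these $2L-1$ offsets has to be achievable. For a given $s$, I would take the consecutive stacks of the $N\times 1\times 1$ net and offset them by $s$; this is again a valid stacked net, because the $1\times 1\times 1$ ring of perimeter $L=4$ can be cut at any of its $L$ vertical edges, and gluing along the overlapping edges is consistent. Conversely, on the striped side the two rows wrap with the same period, so nothing else obstructs the join.

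The main obstacle is justifying that distinct $s$ are genuinely distinct connections, rather than being identified modulo $L$ or modulo the $w\times d$ perimeter. For instance, $s$ and $s-L$ give the same cyclic position of $B$ relative to $A$ on the layering cuboid, yet they are different joins in the flat net. I would handle this by measuring the shift in the unrolled net, i.e.\ relative to the specific layer $A$, not in cyclic coordinates. There the overlap interval $[\max(0,s),\min(L-1,s+L-1)]$ differs for each of the $2L-1$ values, so the nets differ. If the intended convention is cyclic instead, the count would need re-examination, so I would state the flat-offset convention explicitly at the start of the proof.
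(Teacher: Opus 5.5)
Your proposal is correct and takes essentially the same route as the paper: use Theorem~\ref{shift_theorem} to reduce to a single pair of adjacent $L\times 1$ layers, then count the $2L-1$ flat offsets $s\in\{-(L-1),\dots,L-1\}$ for which the two strips share an edge. The paper leaves implicit both your realizability check and your choice to measure the shift in the unrolled net rather than cyclically, but both agree with how it later uses the count $7^{n-1}$.
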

\begin{proof}
If we focus on how two $L\times 1$ layers connect between two striped rings, then we could count that there are $2L-1$ ways a $L\times 1$ layer could be placed above the one below it, and because of Theorem \ref{shift_theorem}, the same shift amount will be repeated for the other $L\times 1$ layers on those striped rings.

\end{proof}

\begin{lemma}\label{global_shift}
The top and bottom parts of a striped cuboid only depend on the amount the striped rings shift reduced modulo the perimeter of $w \times d$ and how much the first striped ring is shifted compared the top part.
\end{lemma}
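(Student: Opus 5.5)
The plan is to separate the striped cuboid into three regions: the top part (the top $w\times d$ face plus the layers that are not yet full striped rings), the middle band of complete striped rings, and the bottom part. The claim is that the top and bottom parts interact with the middle only through the position where the first striped ring sits relative to the top part, and through the position where the last striped ring sits relative to the bottom part. So the first step is to parametrize the middle band completely.

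By Theorem~\ref{shift_theorem}, each pair of consecutive striped rings is joined with a single shift amount, and Lemma~\ref{shift_growth_theorem} says each such joint has one of $2L-1$ values. The stacked net's layers wrap around the $w\times d$ perimeter. Hence the horizontal offset of the $L\times 1$ layers in the last striped ring equals the offset in the first ring plus the sum of all consecutive shift amounts. Only this sum modulo the perimeter $p(w,d)$ affects where the layers land on the cuboid; a full turn returns each layer to the same cells. This reduced sum is exactly the global shift amount.

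Next, I would argue that the top part is a finite configuration glued to the first striped ring along a closed cycle of length $p(w,d)$. Whether the top part's layers continue consistently into that ring, with correct adjacencies and no overlap in the net, depends only on how the first ring is positioned relative to the top part, i.e.\ the initial shift amount. The bottom part is glued to the last striped ring in the same way. Its compatibility depends only on the last ring's offset, which by the previous paragraph equals the initial shift amount plus the global shift amount, reduced mod $p(w,d)$ and then mod $L$, as in the first lemma of this subsection. So the admissible top and bottom configurations are determined by the pair (global shift mod $p(w,d)$, initial shift), as claimed.

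The main obstacle is the locality claim: that the top part cannot "see" anything beyond the first striped ring. This could fail if the $N\times1\times1$ stacks wind between the top face and deeper rings, as in the $M\times(4k+3)\times 3$ description where the stack passes through the $3\times 3$ faces and returns. I would handle this by defining the top part to include every layer that is not a complete striped ring. The first striped ring is then a genuine separating cycle on the cuboid surface, and the net's cut structure crosses it only through the $L\times1$ layer boundaries fixed by the shift. I would also check non-overlap in the flat net, not just on the cuboid. Since the stacked net is a vertical stack of width-$L$ layers, horizontal shifts between layers never create overlaps. So gluing along rings needs only the cuboid-side consistency argued above.
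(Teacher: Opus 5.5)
Your second paragraph already contains the whole of the paper's argument. The band of striped rings acts on ring positions as a rotation by the summed shift $S$, and a full turn around the $w\times d$ perimeter is invisible to the top and bottom parts, so only $S \bmod p(w,d)$ can matter.

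The gap is in your third paragraph. There you treat the top and bottom parts as two independent local gluings: the top is checked only against the initial shift, and the bottom only against the phase of its adjacent ring, which you reduce mod $p(w,d)$ and then mod $L$. This ignores the coupling between the two ends. The stack crosses the band of striped rings $p(w,d)/L$ times, alternately upward and downward. Among the $p(w,d)/L$ layer positions of its adjacent striped ring, the top part decides which are reached by upward traversals, which start downward ones, how they are paired, and where the stack ends. The bottom part does the same at the other end. The net is valid only if these two routings agree after being carried through the band, so that the layers form one stack covering each ring exactly once. That agreement depends on the full rotation $S \bmod p(w,d)$, not just on $S \bmod L$.

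For example, for $w=5$, $d=1$, each striped ring has $12$ cells and is crossed three times (up, down, up). Changing $S$ by $4$ moves every traversal to the neighbouring layer position at the bottom. Of the three such rotations, at most one can put the single downward traversal seen from the top onto the single downward position of the bottom part. As written, your argument would show that the top and bottom parts depend on the global shift only modulo $L$. That is false in general, and it would make $f_1(i,j)$ in the counting theorem depend on $i$ only through $(i+j) \bmod L$.

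The repair is short. State admissibility as a joint condition on the pair (top part, bottom part) together with the identification of top positions with bottom positions given by the band. Then note that this identification is exactly the rotation by $S$ from your second paragraph, which is unchanged when $S$ changes by a multiple of $p(w,d)$. That is precisely the paper's proof. Your locality and no-overlap remarks are correct but not needed for it.
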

\begin{proof}
Observe that as far as the top and bottom parts of a striped cuboid are concerned, if the striped rings part of the striped cuboid does a full rotation around clockwise or anti-clockwise, the effective twist will be reduced by modulo perimeter of $w \times d$.
The same logic applies if the middle part rotates around $k$ times.
\end{proof}

\begin{lemma}\label{even_dist_global}\textbf{(Evenly distributed global shift)}\\
Given a set of striped rings, the total amount of twisting due to the striped rings is evenly distributed between all possibilities modulo the perimeter of $w \times d$ as $h$ increases.
\end{lemma}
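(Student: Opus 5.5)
Write $P = p(w,d)$ for the perimeter of the $w\times d$ face and $s_1,\dots,s_m$ for the shift amounts between consecutive striped rings, with $m$ growing with $h$. By Theorem~\ref{shift_theorem} each $s_i$ is a single integer per pair of rings, and by Lemma~\ref{shift_growth_theorem} it ranges over the $2L-1$ values $\{-(L-1),\dots,L-1\}$. The global shift is $S = s_1+\cdots+s_m \bmod P$. So the lemma becomes a statement about a random walk: if the $s_i$ are independent and uniform on $\{-(L-1),\dots,L-1\}$, the distribution of $S$ on $\mathbb{Z}/P\mathbb{Z}$ tends to uniform as $m\to\infty$. Equivalently, for each residue $r$, the number of tuples with $\sum s_i \equiv r \pmod P$ is $(2L-1)^m/P\cdot(1+o(1))$.

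I would prove this with characters of $\mathbb{Z}/P\mathbb{Z}$. The count for residue $r$ is
\[
\frac{1}{P}\sum_{j=0}^{P-1} \omega^{-jr}\,\phi(\omega^j)^m,
\qquad \phi(z)=\sum_{s=-(L-1)}^{L-1} z^s,
\]
where $\omega=e^{2\pi i/P}$. The $j=0$ term gives the main term $(2L-1)^m/P$. The remaining work is to show $|\phi(\omega^j)|<2L-1$ for every $j\neq 0$. Then those terms are exponentially smaller, and the relative error decays like $(\rho/(2L-1))^m$, where $\rho=\max_{j\ne0}|\phi(\omega^j)|$.

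The key step, and the one I expect to be the main obstacle, is this strict inequality. Equality in the triangle inequality would need all the values $\omega^{js}$, for $s$ in the range, to coincide. Since both $s=0$ and $s=1$ lie in the range whenever $L\ge 2$, that would force $\omega^j=1$, i.e.\ $j\equiv 0$. The condition $L\ge2$ holds here because $L=4$ for the layering $N\times1\times1$ cuboid. In effect, the steps $\pm1$ generate $\mathbb{Z}/P\mathbb{Z}$, so the walk is aperiodic; this is why no parity obstruction appears, unlike a walk using only odd steps.

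The subtle point is whether all $2L-1$ shifts really are independently available at every ring. Near the top and bottom parts, or when rings are non-striped, the set of allowed shifts may be restricted. I would handle this using the theorem's hypothesis that the number of non-striped rings stays constant. The constraints then touch only a bounded number of rings, whose combined contribution is a fixed distribution $\mu$ on $\mathbb{Z}/P\mathbb{Z}$. Convolving $\mu$ with the asymptotically uniform distribution of the free rings gives a distribution that is still asymptotically uniform, and Lemma~\ref{global_shift} then ensures the top and bottom only see $S$. If independence of the free shifts turns out to fail, I would fall back to a transfer-matrix argument: an irreducible, aperiodic matrix on residues whose Perron eigenvector is uniform because the matrix is circulant.
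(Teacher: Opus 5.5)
Your proof is correct, and it takes a sharper route than the paper. The paper's proof is a heuristic. It imagines the shift amounts between striped rings as random, notes that the variance of their sum becomes much larger than $p(w,d)$ as $h$ grows, and concludes that the sum modulo $p(w,d)$ is ``effectively random'' and therefore uniform. You replace this with an exact count via characters of $\mathbb{Z}/P\mathbb{Z}$: the $j=0$ term gives $(2L-1)^m/P$, and the strict bound $|\phi(\omega^j)|<2L-1$ for $j\neq 0$ makes every other term exponentially smaller. This buys two things. First, you get an explicit error rate $(\rho/(2L-1))^m$, which quantifies the approximation $f_2(i,j)\approx 7^{n-1}/p(w,d)$ that the main theorem later treats as an equality. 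Second, you identify the property that actually forces equidistribution, namely aperiodicity of the step set, guaranteed here because $0$ and $1$ both lie in $\{-(L-1),\dots,L-1\}$. Growing variance alone does not suffice, as your odd-step remark shows, so your version closes a hole the paper's wording leaves open. Both arguments rest on the same modeling assumption: each pair of adjacent striped rings independently admits all $2L-1$ shifts, which is also what produces the count $4\cdot 7^{n-1}$ in the theorem. Your convolution remark about a bounded number of constrained rings adds robustness that the paper does not discuss.
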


\begin{proof}
If we imagine that every shift amount between the striped rings is random, the variance of the sum of the shift amounts for each pair of striped rings becomes much greater than the perimeter of $w \times d$ as $h$ increases. This means that as $h$ increases, the
total amount of twisting from the top to the bottom modulo the perimeter of $w \times d$ is effectively random. Therefore, it must evenly distribute between all possibilities modulo the perimeter of $w \times d$ as $h$ increases.
\end{proof}

\begin{lemma}\label{symmetry} As the height of the striped cuboid increases, most striped nets that cover a striped cuboid will have 3 other symmetrically identical ways of covering the cuboid.
\end{lemma}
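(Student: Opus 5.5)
The plan is to build the three extra coverings explicitly from the symmetries of the $N\times 1\times 1$ layering cuboid, and then use Lemma~\ref{even_dist_global} and Lemma~\ref{global_shift} to argue that, for large $h$, these coverings are generically distinct.

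\textbf{Step 1: the candidate coverings.} Fix a striped net $P$ that folds onto the striped cuboid, with stacks of width $L=4$. The layering cuboid $N\times 1\times 1$ has a symmetry group whose elements map the $1\times 1$ ends to themselves. These are the rotations by multiples of $90^\circ$ about the long axis. They act on each $L\times 1$ stack as a cyclic shift by $0,1,2,3$ positions. For the striped cuboid, I would pair these with the isometries that preserve its striped structure: the $180^\circ$ rotations about the three axes and their compositions, restricted to those that map the stripe direction to itself.

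\textbf{Step 2: compatible pairs give coverings.} I would show that each compatible pair (symmetry of the layering cuboid, symmetry of the striped cuboid) carries one folding of $P$ onto the striped cuboid to another folding of the same net. The concrete choice is to compose the folding map with the striped-cuboid isometry. Using Theorem~\ref{shift_theorem}, the shift amount between consecutive striped rings is preserved, so the image is again a valid striped covering. This gives a group of order $4$ acting on the set of coverings of $P$: the identity and three nontrivial elements, namely the $180^\circ$ rotations about the height axis, the width axis and the depth axis, suitably matched with layer relabelings.

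\textbf{Step 3: the orbit has size $4$ for most nets.} The three extra coverings coincide with the original exactly when the net is invariant under the corresponding symmetry. Such an invariance forces a palindromic or periodic constraint on the sequence of shift amounts between successive striped rings. Each shift amount has $2L-1$ choices by Lemma~\ref{shift_growth_theorem}, so a sequence constrained to be symmetric is determined by roughly half of its entries. The symmetric nets therefore number about $(2L-1)^{h/2}$, against roughly $(2L-1)^h$ in total. The fraction of symmetric nets tends to $0$ as $h$ grows, which is the sense of ``most'' in the statement.

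\textbf{Step 4: the top and bottom caps.} I would use Lemma~\ref{global_shift} to reduce the compatibility of the top and bottom caps under each symmetry to a condition on the global shift modulo $p(w,d)$ and on the initial shift. Lemma~\ref{even_dist_global} then shows that this condition is met with the same asymptotic frequency after the symmetry is applied. Hence the transformed configurations occur among the counted coverings in the limit, and the images of the caps do not obstruct Step 2.

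\textbf{Main obstacle.} I expect the hardest step to be Step 2: verifying that each of the three nontrivial symmetries really maps a covering to a covering. The difficulty is at the caps, where the $L\times 1$ layers wrap around the $w\times d$ faces in a specific way. The fallback, if exact invariance fails, is to argue only asymptotically through Lemma~\ref{even_dist_global}. That would show the transformed cap configurations are equally frequent rather than exactly present.
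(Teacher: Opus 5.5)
\textbf{Gap: you use the wrong symmetry group.} Your orbit of size $4$ does not give the lemma in the sense the paper needs, namely $f(h)\approx F(h)/4$: four fixed solutions per free solution.

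In the paper, the three other solutions are the images of the stacked net \emph{in the plane} under the symmetries that keep the stacks horizontal. These are the $180^\circ$ rotation, the flip reversing the order of the stacks, and the flip reversing each stack. For an asymmetric net these are four distinct fixed polyominoes belonging to one free polyomino. The rotational \emph{and mirror} symmetries of the striped cuboid are used only to see that all four still fold onto the cuboid in the same frame.

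You instead compose the folding with the three half-turns of the striped cuboid. Half-turns preserve orientation, so the net unfolded from the new cut pattern is always $P$ or its $180^\circ$ rotation, never a mirror image. The half-turn about the height axis also keeps the bottom end tile at the bottom. It therefore gives a second folding of the very same fixed net and corresponds to no symmetry of $P$ at all. For an asymmetric $P$, your orbit yields only two of the four fixed nets, each twice, and its having four elements is a numerical coincidence.

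If instead you count ways of covering as distinct cut patterns on the fixed cuboid, you have no grounds to discard the cuboid's reflections. They also preserve the stripes and send coverings to coverings (by the mirrored net), so generic orbits have size $8$. In neither reading does restricting to rotations produce the count $3$.

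Two smaller points:
\begin{itemize}
\item The $90^\circ$ rotations of the $N\times 1\times 1$ cuboid in Step~1 play no role. They change neither the net nor how it lies on the striped cuboid.
\item The obstacle you anticipate in Steps~2 and~4 does not exist. Composing a folding with any isometry of the cuboid is automatically a folding, caps included. So Theorem~\ref{shift_theorem} and Lemmas~\ref{global_shift} and~\ref{even_dist_global} are unnecessary, and your asymptotic fallback would not give existence anyway.
\end{itemize}

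To repair the argument, let $\{\mathrm{id}$, rotation by $180^\circ$, stack-order flip, within-stack flip$\}$ act on the planar net. Check that each image covers the cuboid in the same frame via one half-turn and two mirror reflections of the cuboid. Then keep your Step~3, which is a genuine addition the paper lacks (it only asserts that most nets are asymmetric). Invariance under the rotation forces a palindromic shift sequence, and invariance under the stack-order flip forces an anti-palindromic one; each leaves about $7^{h/2}$ out of roughly $7^{h}$ sequences. Invariance under the within-stack flip would force every shift to be $0$.
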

\begin{proof}
Given some height $h$, let $F(h)$ be the number of striped net solutions where the symmetries are ignored and let $f(h)$ be the number of symmetrically distinct striped net solutions.

Assuming a stacked net solution does not have any rotational or mirror symmetries, the stacked net has 3 other symmetrically identical combinations where the stacks remain horizontal, giving four possible fixed solutions for one free solution. Because the cuboid the stacked net folds into has rotational and mirror symmetries, those 4 symmetries could cover the cuboid without needing to be rotated or flipped, and because the vast majority of stacked nets are not symmetric as $h$ increases, the approximation $f(h)\approx F(h)/4$ is appropriate.
\end{proof}

\begin{theorem}(\textbf{The equation is $f(h)\approx(p/q)7^h$})\\
Assuming that the number of non-striped rings on the sides of the striped cuboid remains constant,
the number of solutions for a striped cuboid using the $N \times 1 \times 1$ cuboid as the layering cuboid is $\approx (p/q)7^h$ where $h$ is the height $h$ and $p/q$ is a rational number.
\end{theorem}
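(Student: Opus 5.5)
The plan is to factor the count of fixed (non-symmetry-reduced) striped solutions $F(h)$ into a product of independent choices, and then pass to $f(h)$ using Lemma~\ref{symmetry}.

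\textbf{Step 1: count the joins between consecutive striped rings.} The layering cuboid is $N\times 1\times 1$, so $L=4$. By Lemma~\ref{shift_growth_theorem}, each pair of consecutive striped rings can be joined in exactly $2L-1=7$ ways. By Theorem~\ref{shift_theorem}, the chosen shift is forced all the way around the ring, so each gap contributes one independent factor of $7$. If the number of non-striped rings stays constant (say $c$), the number of striped rings is $h-c'$ for a constant $c'$. The middle section therefore admits $7^{h-c''}$ shift sequences, with $c''$ a constant. Each sequence $(s_1,\dots,s_{h-c''})$ determines a global shift $G=\sum s_i \bmod p(w,d)$.

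\textbf{Step 2: describe the top and bottom caps.} By Lemma~\ref{global_shift} and the preceding lemma on the bottom-left initial shift, whether a given middle sequence completes to a valid net depends only on two things: the initial shift $a\in\{0,\dots,L-1\}$ together with the choice of top cap, and $G$. I would let $T(a,G)$ be the number of valid ways to complete the top and bottom parts (including the non-striped rings). Because the number of non-striped rings is fixed, $T(a,G)$ is a finite integer that does not depend on $h$. Then
\[
F(h)=\sum_{a}\sum_{g=0}^{p(w,d)-1} T(a,g)\,\#\{\text{sequences with } G\equiv g\}.
\]

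\textbf{Step 3: equidistribute the global shift.} Lemma~\ref{even_dist_global} says the count of sequences with $G\equiv g$ is approximately $7^{h-c''}/p(w,d)$ for each residue $g$. This gives
\[
F(h)\approx \frac{\sum_{a,g}T(a,g)}{p(w,d)\,7^{c''}}\,7^h .
\]
Lemma~\ref{symmetry} then gives $f(h)\approx F(h)/4$. The result is $f(h)\approx (p/q)7^h$ with $p=\sum_{a,g}T(a,g)$ and $q=4\,p(w,d)\,7^{c''}$, which is rational.

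\textbf{Main obstacle.} The delicate step is making Step 3 rigorous rather than heuristic. I would treat the sum of shifts as a random walk on $\mathbb{Z}/p(w,d)$ with step set $\{-(L-1),\dots,L-1\}$. Equidistribution holds at an exponential rate provided the step set generates the group and is aperiodic. Aperiodicity holds because it contains $0$, and generation holds because it contains $1$. So the Fourier coefficients at nonzero frequencies have modulus strictly less than $7$, and this makes the error term $o(7^h)$.

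A second point needs checking: that distinct shift and cap choices produce distinct nets. This is plausible because the sequence of shifts can be read off the stacked net.
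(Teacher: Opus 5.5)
Your proposal is correct and follows essentially the same route as the paper: the count factors into $7^{n-1}$ independent ring-to-ring joins times an $h$-independent cap count (the paper's $K=\sum_{i,j}(f_1(i,j)-d_1(i,j))$, your $\sum_{a,g}T(a,g)$), the global shift is equidistributed via Lemma~\ref{even_dist_global}, and Lemma~\ref{symmetry} supplies the factor $1/4$. The one real difference is in your favour: your Fourier bound for the walk with steps $\{-3,\dots,3\}$ modulo $p(w,d)$ makes the equidistribution step rigorous, with an $O(\rho^{h})$ error for some $\rho<7$, whereas the paper relies on a heuristic variance argument and writes $f_2(i,j)=7^{n-1}/p(w,d)$ as an exact equality (it also subtracts a duplicate count $d_1$ where you instead assume injectivity).
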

\begin{proof}
For convenience, let the perimeter of $w \times d = p(w,d)$.
 
Let us focus on building the striped solutions, while ignoring symmetries, in 2 parts where part 1 counts the number of ways the net is built when the shift amounts do not involve going from a striped ring to another striped ring,
and part 2 involves counting the number of ways we can have a global shift amount between the striped rings.

Let $f_1(i, j)$ be the number of ways to obtain a solution for part 1 given a global shift amount of $i$ and an initial shift amount of j, and let $d_1(i,j)$ be the number of duplicate ways to obtain a solution for part 1.
Let $f_2(i, j)$ be the number of ways to obtain a solution for part 2 given a global shift amount of $i$ and an initial shift amount of j.

Because the two parts are independent for each global shift amount and initial shift amount, the total \# of solutions $= \sum_{i=1}^{p(w,d)}\sum_{j=0}^{3}(f_1(i, j)-d_1(i,j))  f_2(i, j)$.

Because of the fact that there are only 4 initial shift amounts possible (by definition of initial shift amount) and by applying Lemma \ref{shift_growth_theorem} for every pair of adjacent striped rings, we get that $\sum_{i=1}^{p(w,d)}\sum_{j=0}^{3}f_2(i, j) = 4(7^{n-1})$ where n is the number of striped rings. By Lemma \ref{even_dist_global}, the number of solutions for part 2 distribute evenly, so $f_2(i, j) = 7^{(n-1)}/(p(w,d))$ for all relevant $i$ and $j$ where $n$ is the number of striped rings.\\

Therefore, the total number solutions $=  \sum_{i=1}^{p(w,d)}\sum_{j=0}^{3} (f_1(i, j)-d_1(i,j)) (7^{(n-1)}/p(w,d))$ \\$ = (7^{(n-1})/p(w,d))\sum_{i=1}^{p(w,d)}\sum_{j=0}^{3} (f_1(i, j)-d_1(i,j)) $.

By the way $f_1(i, j)$ and $d_1(i,j)$ is defined, and because we assumed the non-striped rings do not change as $h$ increases, the sum could only be an integer constant, so let us call it $K$.

Therefore, the total \# of solutions $= K(7^{(n-1)})/(p(w,d))$

Because we assumed that the number of non-striped rings stays constant, the number of striped rings must increase as the height $h$ increases. Therefore, we could let $h = n + c$ for some constant $c$.

Therefore, the total \# of solutions $= K(7^{(h-c-1)})/p(w,d)$.
 $= 7^h(K/(p(w,d)7^{c+1})).$\\
 $= (p/q_0)7^h$ where $p$ and $q_0$ are integers.

As a final step, because of Lemma \ref{symmetry}, the number of symmetrically distinct solutions is approximately a quarter of $(p/q_0)7^h$, which could be represented by the equation $f(h)= (p/q)7^h$ where $q=4q_0$.
\end{proof}

\begin{theorem}\label{area_equiv_1}\textbf{(Area Equivalence Theorem)}\\
If we fix the depth to a constant $d$, and set the surface area of $h_1\times d\times w_1$ to be equal to the surface area of $h_2\times d \times w_2$, then $(h_1+d)(w_1+d)=(h_2+d)(w_2+d)$.
\end{theorem}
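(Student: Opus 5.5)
Since the statement is purely algebraic, I would prove it by writing out the surface-area formula for each cuboid and completing the product. The surface area of an $h\times d\times w$ cuboid is $2(hd+hw+dw)$. The hypothesis therefore reads
\[
2(h_1d+h_1w_1+dw_1)=2(h_2d+h_2w_2+dw_2).
\]
After dividing by $2$, this becomes $h_1w_1+d(h_1+w_1)=h_2w_2+d(h_2+w_2)$.

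The key step is to add $d^2$ to both sides. The left side is then exactly $(h_1+d)(w_1+d)=h_1w_1+dh_1+dw_1+d^2$, and the right side is $(h_2+d)(w_2+d)$ in the same way. This gives the claimed identity. Every step is reversible, so the converse also holds: equal products $(h_i+d)(w_i+d)$ force equal surface areas. I would state this explicitly, since the theorem will presumably be used in that direction when checking the conditions in Conjectures~\ref{conj:depth1} and~\ref{conj:depth3}.

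There is no real obstacle here; the only insight needed is to complete the product by adding $d^2$. As a consistency check, I would also relate the result to the notation of the conjectures. Since $p(C,D)=2(C+D)$, we have $p(w_i,d)p(h_i,d)=4(w_i+d)(h_i+d)$. The theorem is therefore equivalent to the condition $p(w_1,d)p(h_1,d)=p(w_2,d)p(h_2,d)$ that appears there.
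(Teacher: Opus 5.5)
Your proof is correct and matches the paper's argument. The paper writes the surface area as $2(w+d)(h+d)-2d^2$ and cancels the $-2d^2$ term, which is the same as your step of adding $d^2$ to complete the product. Your final remark that $p(w_i,d)\,p(h_i,d)=4(w_i+d)(h_i+d)$ is exactly the Perimeter Multiplication Corollary that the paper derives immediately afterward.
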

\begin{proof}
It is known that the surface area of a cuboid $= 2(wd + hd + wh) = 2(w+d)(h+d)-2d^2$.\\
Therefore,
surface\_area\_1 $= 2(w_1+d)(h_1+d)-2d^2 =$ surface\_area\_2 $= 2(w_2+d)(h_2+d)-2d^2$.\\
Therefore, $2(w_1+d)(h_1+d)-2d^2 = 2(w_2+d)(h_2+d)-2d^2$.\\
Therefore, $(w_1+d)(h_1+d) = (w_2+d)(h_2+d)$.
\end{proof}

 In an attempt to trivialize the formula, Figure \ref{perimeter_factor} shows half the sides of a flattened cuboid.

\begin{cor}\label{perimeter_mult} \textbf{(Perimeter Multiplication Corollary)}
If we fix the depth, and the surface area of $h_1 \times d \times w_1$ is equal to the surface area of $h_2\times d\times w_2$, then
$p(h_1, d)p(w_1,d)=p(h_2, d)p(w_2,d)$ where p(C, D) is the perimeter of $C\times D$.
\end{cor}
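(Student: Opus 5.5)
The plan is to obtain the corollary directly from Theorem~\ref{area_equiv_1}: the only work is to express the factors $(h+d)$ and $(w+d)$ as perimeters and multiply both sides by a common constant.

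The first step is to write each perimeter explicitly. For a rectangle of sides $C$ and $D$ we have $p(C,D)=2(C+D)$. In particular, $p(h_i,d)=2(h_i+d)$ and $p(w_i,d)=2(w_i+d)$ for $i=1,2$.

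The second step is to apply Theorem~\ref{area_equiv_1} to the two cuboids $h_1\times d\times w_1$ and $h_2\times d\times w_2$. Its hypotheses are exactly ours: the depth $d$ is fixed and the two surface areas are equal. The theorem therefore gives $(h_1+d)(w_1+d)=(h_2+d)(w_2+d)$.

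The third step is to multiply both sides of this identity by $4$:
\[
p(h_1,d)\,p(w_1,d) \;=\; 2(h_1+d)\cdot 2(w_1+d) \;=\; 2(h_2+d)\cdot 2(w_2+d) \;=\; p(h_2,d)\,p(w_2,d).
\]
This is the claimed identity.

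There is no substantive obstacle here. The only thing to check is that the roles of $h$ and $w$ are consistent between Theorem~\ref{area_equiv_1} and the corollary. That theorem is stated with $(h_1+d)(w_1+d)$, so it pairs $h_1$ with $w_1$ on one side and $h_2$ with $w_2$ on the other, which is the same pairing the corollary uses. Geometrically, the identity says that the product of the two face-perimeters through the depth direction is determined by the surface area once $d$ is fixed. This is the form used in the third condition of Conjectures~\ref{conj:depth1} and~\ref{conj:depth3}.
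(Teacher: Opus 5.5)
Your proposal is correct and matches the paper's proof. Both invoke Theorem~\ref{area_equiv_1} to get $(h_1+d)(w_1+d)=(h_2+d)(w_2+d)$, multiply both sides by four, and recognize each factor $2(x+d)$ as the perimeter $p(x,d)$.
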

\begin{proof}
Just multiply the equation in Theorem \ref{area_equiv_1} by four and recognize and factor the perimeters in both sides of the equation.
\end{proof}

\begin{figure}
\centering
\includegraphics[width=80mm,scale=1.0]{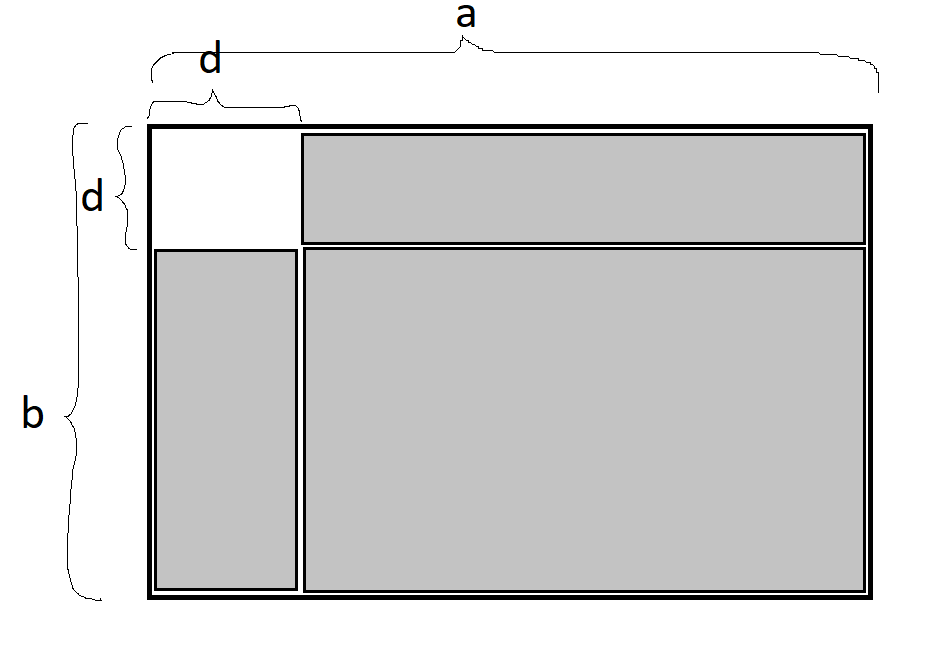}
\caption{Picture to accompany the perimeter factor argument. This figure is trying to show that the surface area of the cuboid is $2(ab-d^2)$ where $a=w+d$ and $b=h+d$.}
\label{perimeter_factor}
\end{figure}

\subsubsection{Perimetor Factor Analysis}
\textbf{Definition:}\\
A perimeter factor is the ratio between the perimeter of $d \times h_2$ of cuboid 2 and $d \times h_1$ of cuboid 1. This concept is useful because when the depth $d$ is the same, we can use Corrollary \ref{perimeter_mult} to deduce the dimensions of the second cuboid by knowing the dimensions of the first cuboid and the perimeter factor.

\begin{obs}\label{constant_cycle}\textbf{(Constant Number of Stacks to Go Around)}
The number of stacks required for a striped net to go around the striped cuboid 1 time is the perimeter of $h\times d$. This observation is based on the striped nets we have observed.
\end{obs}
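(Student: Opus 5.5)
Observation~\ref{constant_cycle} says that the number of $L\times 1$ stacks needed for a striped net to go once around the striped cuboid equals $p(h,d)$. I would justify it geometrically, reading the striped rings as the rings around the $w\times d$ perimeter and the stacks as running around the $h\times d$ cross-section, as in Figure~\ref{fig:stripe_diagram}.

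\textbf{Setup.} Fix one striped ring. It is a band wrapping around the cuboid, one unit thick along the direction of $w$. Its cells are exactly one column of cells around the loop formed by the two $h\times w$ faces and the two $d\times w$ faces. Cut transversally, this band is a cycle whose length is the perimeter of the $h\times d$ cross-section, namely $p(h,d)=2(h+d)$.

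\textbf{Counting stacks.} In a striped net each colour class is a contiguous run of consecutive stacks. Each stack is a straight $L\times 1$ strip (here $L=4$) coming from one layer of the $N\times 1\times 1$ cuboid. When it folds onto the striped cuboid, I would argue it lies inside a single striped ring and runs along the $w$ direction. This is forced because every cell of a stack is adjacent to cells of the same colour, and rings of different colours meet only along seams parallel to the $h\times d$ loop. With width $L$ matching the ring width, each stack therefore covers exactly one cell of the $h\times d$ cycle, and consecutive stacks in the net occupy consecutive positions around that cycle. Going around the cuboid once means returning to the starting position on a cycle of length $p(h,d)$, which needs exactly $p(h,d)$ stacks.

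\textbf{Main obstacle.} The hard step is proving that each stack stays within one ring and advances by exactly one cycle position. Both facts rest on the unproven structural description of striped nets, which is why the paper states this as an observation. I would first try an induction along the sequence of stacks using the argument of Theorem~\ref{shift_theorem}: adjacent stacks force each other's placement, so every step advances one position. I would then fall back on the enumerated data in Tables~\ref{tab:depth-1-results} and~\ref{tab:depth-3-results} as empirical confirmation.
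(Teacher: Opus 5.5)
\textbf{The striped ring is misidentified.} The paper gives no argument for this statement. It records it purely from inspection of the striped nets the authors found, so any justification here is heuristic. Your heuristic, however, rests on the wrong ring. You announce that striped rings go around the $w\times d$ perimeter, but your Setup then describes a band one unit thick in the $w$ direction that loops through the $h\times w$ and $d\times w$ faces. That is a loop around the $h\times d$ cross-section.

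In the paper a striped ring is one unit thick in the $h$ direction and has length $p(w,d)$. This is why the global shift is reduced modulo $p(w,d)$, and why the number $n$ of rings satisfies $h=n+c$ in the proof of the $7^h$ theorem. Such a ring holds $p(w,d)/L$ stacks laid end to end: $3$ for $8\times5\times1$ and $9$ for $17\times2\times1$. These stacks are \emph{not} consecutive in the net.

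So your claim that each colour class is a contiguous run of consecutive stacks is false. It would also break Theorem~\ref{shift_theorem}, whose proof propagates one shift through several stacks ``all the way around'' a pair of rings. Your setup is also internally inconsistent: a $4\times1$ stack running along $w$ cannot sit inside a band that is one unit thick along $w$.

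\textbf{Where the $p(h,d)$ cycle actually is.} The cycle of length $p(h,d)$ you want runs transverse to the rings. Its positions are the $h$ ring levels up one side, the rows of the top $w\times d$ face, the $h$ levels back down, and the rows of the bottom face.
\begin{itemize}
\item Each stack occupies one position and runs along its ring, possibly wrapping around the edges parallel to $h$.
\item Consecutive stacks share a long edge, so they occupy adjacent positions.
\item Each ring is revisited once per half-turn.
\end{itemize}
For $26\times1\times1$ folded onto $8\times5\times1$, this gives $8+1+8+1+8=26$ stacks, and one full turn is $18=p(8,1)$ stacks.

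In this picture, that each step has size one is automatic once stacks lie along rings. The real content of the observation, which remains unproved, is twofold: the sequence never reverses direction, and the two passages over the $w\times d$ ends together take exactly $2d$ stacks. The paper itself notes that the depth-$3$ ends are more complicated. Theorem~\ref{shift_theorem} cannot supply either fact, because it controls sideways shifts along a ring, not progress around the $h\times d$ loop.

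\textbf{The empirical fallback is circular.} The counts in Tables~\ref{tab:depth-1-results} and~\ref{tab:depth-3-results} come from the striped search. The paper organizes that search around Theorem~\ref{shift_constraint}, which explicitly assumes this observation, so those counts do not independently confirm it.
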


\begin{theorem}\label{shift_constraint}\textbf{(The Shift Equivalence Constraint Theorem)}\\
Given a striped cuboid with two adjacent striped rings, due to Theorem \ref{shift_theorem}, and assuming observation \ref{constant_cycle}, the shift amounts between stacks associated with the shift amount for those two adjacent striped rings are predictable.
\end{theorem}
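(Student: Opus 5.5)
The plan is to show that, once the shift amount $s$ between two adjacent striped rings is fixed, the shift between every pair of stacks that cross the boundary between those rings is already determined. The argument has two parts: a local rigidity argument along one ring boundary, and a bookkeeping argument that locates each stack using Observation~\ref{constant_cycle}.

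First I would fix the two adjacent striped rings, say ring $R_a$ below ring $R_b$. By Theorem~\ref{shift_theorem}, every $L\times 1$ layer of $R_b$ sits on a layer of $R_a$ with the same offset $s$. By Lemma~\ref{shift_growth_theorem}, $s$ is one of the $2L-1$ values $-(L-1),\dots,L-1$.

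Next I would use Observation~\ref{constant_cycle}. It says a full turn around the cuboid consists of exactly $p(h,d)$ stacks, so the stacks in a ring are indexed cyclically by $\mathbb{Z}/p(h,d)$. The position of the $k$-th stack is then its starting offset plus $kL$, measured along the perimeter $p(w,d)$ of the ring. The stack in $R_b$ above stack $k$ of $R_a$ starts at that position plus $s$.

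From these positions I would read off the stack-to-stack shift in the net. That is the difference between the unrolled position of the $R_b$ stack and that of the corresponding $R_a$ stack, which is $s$ corrected by a multiple of $p(w,d)$ determined by the index $k$ and the wrap count. Since the index increments uniformly by one and the number of stacks per cycle is constant, this correction depends only on $k \bmod p(h,d)$. So the whole sequence of inter-stack shifts is determined by $s$ and the initial shift amount. This is what ``predictable'' means in the statement. Lemma~\ref{global_shift} can then be invoked to note that only the residue of the accumulated shift modulo $p(w,d)$ matters for the top and bottom parts.

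The main obstacle is making precise how the stacks of the layering cuboid map onto the rings. In particular, I need to identify which stack in $R_b$ is glued to which in $R_a$ when a stack wraps across a corner between faces. I would handle this by arguing, as in the proof of Theorem~\ref{shift_theorem}, that adjacency propagates one layer at a time around the ring. Because the corner crossings occur at the same cyclic indices in both rings (by Observation~\ref{constant_cycle}), the propagation never introduces a different offset, so the shift is a function of $s$ and $k$ alone.
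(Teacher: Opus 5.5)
\textbf{There is a genuine gap.} Your bookkeeping rests on a wrong picture of how the layer sequence sits on the striped cuboid.

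Consecutive stacks of the $N\times 1\times 1$ net are glued along their long sides. So stack $k+1$ lies in the ring \emph{adjacent} to the ring of stack $k$, not beside it in the same ring. The sequence travels around the $h\times d$ perimeter: up one $w\times h$ face, across the top, down the opposite face, across the bottom. Observation~\ref{constant_cycle} says one such circuit takes $2(h+d)$ stacks. This breaks several of your steps:
\begin{itemize}
\item A single ring contains only $p(w,d)/L$ stacks, so it is not indexed by $\mathbb{Z}/p(h,d)$.
\item ``Starting offset plus $kL$'' is therefore not the position of the $k$-th stack.
\item The net shift between consecutive layers always lies in $\{-(L-1),\dots,L-1\}$, so there is no correction by a multiple of $p(w,d)$ to compute.
\item The ring corners you single out as the main obstacle are harmless, because the ring unrolls isometrically.
\item At this boundary the shift is simply $s$, so the initial shift amount plays no role here.
\end{itemize}

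What is missing is the fact that actually makes the shifts predictable. The boundary between $R_a$ and $R_b$ is crossed twice in every circuit. It is crossed once going up at some index $x$, and once going down at index $2(h+d)-x+d$. Here the indexing puts index $d$ at the step from the top face onto the first side layer, and the index resets every full circuit.

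Theorem~\ref{shift_theorem} gives the same ring offset $s$ at both crossings. You still have to check that this yields the same \emph{net} shift. After the fold over the top face, two things reverse: the direction of travel (from $R_b$ back to $R_a$), and the net's left--right direction relative to the ring coordinate. These two reversals cancel.

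Combined with the period $2(h+d)$ from Observation~\ref{constant_cycle}, this gives the paper's conclusion. The net shift sequence is periodic with period $p(h,d)$ and is symmetric under $x\mapsto 2(h+d)-x+d$. That is exactly the constraint exploited later when a net must cover two striped cuboids. Your argument never identifies the two crossings or the orientation reversal, so it does not reach this conclusion.
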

\begin{proof}

By Observation \ref{constant_cycle} and Theorem \ref{shift_theorem}, if we are dealing with two adjacent striped rings, the shift amount will repeat every $2(h+d)$ layers (or every full rotation around the $h\times d$ rectangle).

On top of that, the shift amount will also repeat somewhere in between because if the layers were originally going up the rings, they will also need to go down the rings. In other words, if we consider the shift-amount index $d$ to be from the top side to the first layer on the side and $h+d$ to be from the last layer on the side to the bottom, and the shift-amount index returns to 0 after every full rotation around the $h \times d$ perimeter, then if a shift-amount index $x$ is between two rings, the  shift-amount index $2(h + d) - x + d$ has the same shift amount.
\end{proof}

\subsubsection{Commentary on the Shift Equivalence Theorem}

The previous theorem really constrains the search space for striped nets in a way that is easy to check. When we try to get a net that covers two striped cuboids, the constraints add up. We can mitigate the constraints a little by using a whole-numbered perimeter factor $p_f$. That way, the full rotation constraint will only need to be applied once because the constraint that it repeats every $2(h_2+d_2)$ layers is already captured by the constraint from the other cuboid requiring that it repeat every $2(h_1+d_1)$ layers. (\emph{i.e.} $p_f =(2(h_2+d))/(2(h_1+d))=full\_rotation\_2/full\_rotation\_1$).

For this reason, we decided to search using whole-numbered perimeter factors. We have not fully explored what happens when we have a fractional perimeter factor greater than 1. We can only say that we did not have any success with using non-whole-numbered perimeter factors.

\begin{theorem}\label{opposite_sides}\textbf{(Opposite Sides Theorem)}\\
Stacked nets imply that the top and bottom sides of the layering cuboid will land on the opposite sides of the striped cuboid.
\end{theorem}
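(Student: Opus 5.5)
The plan is to argue through the layer structure of a stacked net, using a parity (sidedness) argument on the striped cuboid. Label the faces of the $N\times 1\times 1$ layering cuboid as top cap $T$, bottom cap $B$, and the $N$ layers in between. Each layer is a full $4\times 1$ band, and consecutive bands share exactly one horizontal edge segment in the net, with some shift. On the striped cuboid, each $4\times 1$ band is a strip of four unit squares lying in a straight line on the surface. This strip runs in the direction transverse to the stripes and may wrap over edges of the $h\times d$ perimeter.

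\textbf{Step 1: fix the orientation of the layers.} By Observation~\ref{constant_cycle}, the stacks advance around the $h\times d$ perimeter, one stack per unit, a full cycle taking $p(h,d)$ stacks. So the sequence of layers, read from the layer adjacent to $T$ to the layer adjacent to $B$, traces a monotone walk around the cuboid. Using the striped-ring structure, this walk goes up one side and down the opposite side, as in the proof of Theorem~\ref{shift_constraint}, where index $x$ pairs with $2(h+d)-x+d$.

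\textbf{Step 2: locate the caps.} $T$ is a single unit square attached to one edge of the first layer and perpendicular to the stacking direction. Since the first layer lies flat and runs across a face, $T$ must sit on the neighbouring position in the direction opposite to the stacking progression. The same holds for $B$ relative to the last layer.

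\textbf{Step 3: the parity argument.} The number of layers plus the two caps equals the total area divided by $4$, after accounting for the caps. Equivalently, the layers cover exactly the area of the cuboid minus $2$. I would then count how many full half-cycles around the $h\times d$ perimeter the walk makes: each half-cycle carries the walk from one face to the opposite face. Using Corollary~\ref{perimeter_mult} and the fact that the layers fill the surface, I would argue the walk ends after an odd number of half-traversals. That places the final layer, and hence $B$, on the face opposite to $T$.

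\textbf{Main obstacle.} The hardest part is Step 3, turning the informal "goes up then comes down" picture into a rigorous statement that the endpoint lands on the opposite face. I would first try a coloring argument instead: color the surface of the striped cuboid by which half of the $h\times d$ cycle each cell lies on, and show each $4\times 1$ layer contributes equally to both halves. The two leftover cells, $T$ and $B$, must then balance the two halves, forcing them onto opposite sides. If the balancing fails on wrapped layers, I would fall back to citing the observed structure, as the paper does for Observation~\ref{constant_cycle}.
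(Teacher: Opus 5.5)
Steps~1 and~2 only set up the picture: the layers wind around the $h\times d$ perimeter, with $T$ at the start of the walk and $B$ at its end. All the content is in Step~3, and Step~3 asserts the conclusion instead of proving it. Once Steps~1--2 are granted, ``the walk makes an odd number of half-traversals'' is just a restatement of the theorem.

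The tool you cite cannot deliver it. Corollary~\ref{perimeter_mult} is an identity between two \emph{different} cuboids of equal area and depth, and it says nothing about parity on a single cuboid. Worse, the paper's logic runs the other way: the proof of Theorem~\ref{odd_perimeter_factor} \emph{derives} the oddness of the number of half-traversals from the Opposite Sides Theorem, so using it here is circular.

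The walk structure alone does not force oddness either. In the striped picture each half-traversal lays one layer in each striped ring, so there are $p(w,d)/4=(w+d)/2$ of them, and nothing in Steps~1--2 rules out this number being even. What you are missing is a closure count that excludes the even case. The paper gets it as follows. If both caps were at the bottom, the top would have to be closed off by layers alone. Closing requires the topmost striped ring to have more tiles than the first non-striped layer above it. But with no $1\times 1$ cap up there, the tiles above the ring would be exactly as many as on the ring, which is a contradiction.

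Your coloring fallback does not repair this, because its premise is false for the split you propose. A layer running along $w$ on the front, back, top or bottom face has a constant position on the $h\times d$ cycle. So it lies entirely in one half, however you cut the cycle, and there is no ``equal contribution'' to balance.

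A workable version counts parity on the two end regions beyond the outermost striped rings rather than on halves. Suppose each striped ring is a union of whole layers, as in the paper's striped picture. Then each end region is a union of whole layers plus whatever caps it contains. Its area $wd+j\,p(w,d)$, with $j$ non-striped rings at that end, is odd when $w$ and $d$ are odd (as for $w=4k+1,d=1$ and $w=4k+3,d=3$). That forces exactly one cap at each end.

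Either way, the proof lives in a tile count at the ends of the striped cuboid, which is exactly what Step~3 omits. Appealing to ``the observed structure'' is not a substitute.
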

\begin{proof}
For a contradiction, assume without loss of generality that both the top and bottom tiles of the layering cuboid (or the $1 \times 1$ tiles of the $N\times 1\times 1$ cuboid) are at the bottom of the striped net. That means that the top of the striped cuboid would not be able to close because the striped rings would need to have more tiles than the first non-striped ring layer, but without the bottom or top of the layering cuboid, the number of tiles above the striped ring would be equal to the number of tiles on striped ring. This is a contradiction.
\end{proof}

\begin{theorem}\label{odd_perimeter_factor}\textbf{(Odd Perimeter Factor Theorem)}
If the perimeter factor between two striped cuboids is an integer, then it must be odd.
\end{theorem}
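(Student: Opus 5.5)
I will argue by contradiction, combining Theorem~\ref{opposite_sides} with Observation~\ref{constant_cycle}. Let the perimeter factor be an integer $p_f=2(h_2+d)/2(h_1+d)$, and suppose $p_f$ is even.

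\textbf{Step 1: a single net yields a single walk.} The stacked net is one sequence of $L\times 1$ layers. It starts at the $1\times 1$ bottom tile of the layering cuboid and ends at its $1\times 1$ top tile. When this net is folded onto cuboid $k$, the sequence of stacks becomes a walk around the $h_k\times d$ perimeter. By Observation~\ref{constant_cycle}, one full revolution takes exactly $2(h_k+d)$ stacks. Write $S$ for the number of stacks strictly between the two end tiles, which is the same in both foldings. The walk starts next to one $w\times d$ face (the top or bottom part of the striped cuboid). By Theorem~\ref{opposite_sides}, it must end next to the opposite $w\times d$ face.

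\textbf{Step 2: parity of half-revolutions.} Going from one $w\times d$ face to the opposite one means traversing an odd number of half-revolutions, each of length $h_k+d$, plus a fixed offset. I would make this precise using the shift-amount index of Theorem~\ref{shift_constraint}: the index runs from $d$ to $h+d$ along one side, and the faces sit at indices $0$ and $h+d$ modulo $2(h+d)$. With the offset absorbed into the non-striped end rings, this gives
\[
S \equiv (h_k+d) \pmod{2(h_k+d)} \quad\text{for } k=1,2.
\]
Equivalently, $S=(2m_k+1)(h_k+d)$ for some integer $m_k\ge 0$.

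\textbf{Step 3: the contradiction.} Substitute $h_2+d=p_f(h_1+d)$ into the $k=2$ equation:
\[
S=(2m_2+1)\,p_f\,(h_1+d).
\]
Comparing with the $k=1$ equation gives
\[
2m_1+1=(2m_2+1)p_f.
\]
The left side is odd, so $p_f$ must be odd. This contradicts the assumption that $p_f$ is even, which proves the theorem.

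\textbf{Main obstacle.} The hard part is Step 2: justifying that the end offsets are the same in both foldings, so that the count $S$ is literally an odd multiple of $h_k+d$ rather than merely congruent to it up to a correction depending on $k$. My first attempt would use the structure of the top and bottom parts described for the $M\times(4k+3)\times 3$ case: the non-striped end portions contribute a fixed number of stacks. I would then fold that contribution into $S$, relying on the same conventions the paper uses for Theorem~\ref{shift_constraint}.

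If a fully rigorous version is not available, the argument will be stated as resting on Observation~\ref{constant_cycle}, in keeping with the paper's standard of proof for this section.
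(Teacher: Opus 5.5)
Your argument is essentially the paper's. Both use Theorem~\ref{opposite_sides} to force an odd number of half-revolutions around the $h\times d$ perimeter in each folding, note that the two counts differ by the factor $p_f$, and conclude that an even $p_f$ would make one count even and put both $1\times 1$ end tiles on the same side.

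The only difference is how the scaling is obtained. You derive it from the common stack count $S$ and Observation~\ref{constant_cycle}, whereas the paper simply cites Corollary~\ref{perimeter_mult}. The end offset you worry about is harmless: the area identity behind Theorem~\ref{area_equiv_1} gives $S+\tfrac{d^2+1}{2}=\tfrac{w_k+d}{2}(h_k+d)$, so the correction is the same in both foldings because the depth $d$ is shared.
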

\begin{proof}
For a contradiction, let the perimeter factor be even. Consider two striped cuboids $c_1$ and $c_2$ of heights $h_1$ and $h_2$ where $h_1 > h_2$. By the Theorem \ref{opposite_sides}, the cuboids will have the bottom $1\times 1$ side at the bottom of the striped cuboid and the top $1\times 1$ side at the top of the striped cuboid. Therefore, the number of times the layering cuboid goes half way around the $h\times d$ sides of the cuboids must be an odd number. Also note that the number of times the layering cuboid goes half way around $c_2$ will be the perimeter factor times the layering cuboid goes half way $c_1$ by Corrolary \ref{perimeter_mult}. Because an even perimeter factor times an odd number gives an even number, the number of times the layering cuboid goes half way around the $h_2\times d$ side will be an even number, and the top and bottom $1\times 1$ sides of layering cuboid will be on the same side of the striped cuboid $c_2$. But by Theorem \ref{opposite_sides}, we know that this is not allowed. This is a contradiction.
Therefore, the perimeter factor must be an odd number.
\end{proof}

\subsection{Quick Note on Reference Materials and Code}\label{zenodo_appendix}

We uploaded reference images, raw output files and code to zenodo.org. The main supporting material could be found in "hidny/PolyominoNetsPaperMaterials: v1.0.0"~\cite{michael_tardibuono_2026_material}. When there is detailed results to share in the paper, the paper will default to referring to the results found here. We would also like to highlight that the main directory of that upload has a file called 'combined\_results.txt', and that file is a text file that contains most of the results mentioned in this paper.

The code we used to get the stacked net and simple-phase net results can be found in "hidny/CuboidSimplePhaseNetSearch: Release For Sharing"~\cite{michael_tardibuono_2026_stack_net}, and the code that did the thorough search can be found in two different uploads. The fast version is in "hidny/CuboidThorough: Release for sharing"~\cite{michael_tardibuono_2026_solver_thorough} and the cleaned, but slow version, is in "hidny/CuboidNetSolver: Release For Sharing"~\cite{michael_tardibuono_2026_NetSolverClean}.

\end{document}